
\documentclass[final,5p,times,twocolumn]{elsarticle}




\usepackage{amssymb}
\usepackage{braket}
\usepackage{amsmath}
\usepackage{amsfonts}
\usepackage{amsthm}
\usepackage{amssymb}
\usepackage{tikz}
\usepackage{yhmath}
\usepackage{multirow, hyperref}
\usepackage{makecell, ulem}
\usepackage{url}
\usepackage[ruled,linesnumbered]{algorithm2e}
\usetikzlibrary{arrows}

\newtheorem{theorem}{Theorem}
\newtheorem{prop}{Proposition}
\newtheorem{lemma}{Lemma}

\newtheorem{corollary}{Corollary}

\newtheorem{definition}{Definition}

\newcommand \mc {\text{ Max-Cut}}



\journal{...}

\begin{document}

\begin{frontmatter}



\title{Comparison of Hyperplane Rounding for Max-Cut and Quantum Approximate Optimization Algorithm over Certain Regular Graph Families}


\author[GT]{Reuben Tate}
\affiliation[GT]{organization={CCS-3 Information Sciences, Los Alamos National Lab}, country={USA}}
\author[MIT]{Swati Gupta\corref{cor1}}
\affiliation[MIT]{organization={Sloan School of Management, MIT}, country={USA}}
\cortext[cor1]{Corresponding Author. Email: swatig@mit.edu}

\begin{abstract}
There is a strong interest in finding challenging instances of NP-hard problems, from the perspective of showing quantum advantage. Due to the limits of near-term NISQ devices, it is moreover useful if these instances are small. In this work, we identify two graph families ($|V|<1000$) on which the Goemans-Williamson algorithm for approximating the Max-Cut achieves at most a 0.912-approximation. We further show that, in comparison, a recent quantum algorithm, Quantum Approximate Optimization Algorithm (depth $p=1$), is a 0.592-approximation on Karloff instances in the limit ($n \to \infty$), and is at best a $0.894$-approximation on a family of strongly-regular graphs. We further explore construction of challenging instances computationally by perturbing edge weights, which may be of independent interest, and include these in the \href{https://github.com/swati1729/CI-QuBe}{CI-Qube} github repository.

\end{abstract}



\begin{keyword}
Combinatorial Optimization \sep Quantum Advantage \sep Maximum Cut \sep Quantum Benchmarking



\end{keyword}

\end{frontmatter}


\section{Introduction}
Demonstrating quantum advantage\footnote{We note that there is not universal agreement regarding the precise definition of quantum advantage. Many have used the term \emph{quantum supremacy} to refer to what we call quantum advantage in this work; meanwhile, many will instead use the term quantum advantage to mean a situation where, for some given task, a physical quantum devices has at least some \emph{slight} edge in runtime compared to any classical algorithm on any classical machine. Regardless of the definition of quantum advantage that one uses, we believe that the instances presented in this work should be of interest to the reader.}, i.e. the ability for physical quantum computers to complete a given task that no classical computer can complete in a reasonable amount of time, has been of particular interest to quantum researchers. In 2019, Google had announced that they had demonstrated quantum advantage on a 53 qubit quantum device; in particular, they claimed that their quantum device could sample from random quantum circuits that would take ``about 10000 years to complete"  \cite{Google2019}. While classical devices' ability to simulate quantum systems is of great interest, it remains an open question which scientifically and commercially relevant problems could exhibit a quantum advantage. 
On the other hand, there exists algorithms, such as Shor's algorithm for integer factorization, which are theoretically faster than any currently known classical algorithm; however, to be of practical use, Shor's algorithm requires a large number of qubits which is currently not feasible for current quantum devices \cite{S94}.

Many believe that the Quantum Approximate Optimization Algorithm (QAOA) applied to the Max-Cut problem has the potential for demonstrating a quantum advantage over classical algorithms \cite{HTOLHS21, ZWCPL20, bayerstadler2021industry, campbell2022qaoa, FH16, shaydulin2021classical}. This is in part due to a recent result by Farhi et al. \cite{FGG14} which shows that under the adiabatic limit (i.e., as the circuit depth goes to infinity), the QAOA algorithm would converge to the Max-Cut. It may be the case though that the QAOA (and its variants) only have an advantage over classical algorithms on certain classes of graphs for finite circuit depths. To identify such classes of graphs, one can take a two-pronged approach: (i) find graph families which are challenging for classical optimization algorithms, (ii) show that the QAOA performs well on such graph families. From the perspective of showing quantum advantage for certain families of graphs on near-term NISQ devices, it is further of interest to find such relatively small graphs. 

In this work, we further our knowledge of challenging graph instances for Max-Cut, where classical algorithms may not be able to achieve the optimal solutions easily. In 2018, Dunning et al. \cite{dunning2018} showed that for graphs upto around ~750 nodes, Max-Cut can be solved optimally (in reasonable time) and verified using the suite of heuristics implemented in their library \texttt{MQLib}. We therefore turn our attention to the Goemans-Williamson (GW) algorithm \cite{GW95}, as a competing benchmark\footnote{Many other classical algorithms exist for the Max-Cut problem including Trevisan's algorithm, which yields a $0.614$-approximation \cite{Trevisan12, Soto15}, and numerous heuristics such as those found in the MQLib library \cite{dunning2018}. The task of identifying classes of instances that demonstrate quantum advantage over \emph{all} classical algorithms is difficult; therefore, we consider the tractable task of demonstrating an advantage over the best-known classical algorithm of Goemans \& Williamson \cite{GW95}. Additionally, we do not compare to higher-degree relaxations of Max-Cut due to their prohibitively higher computational cost \cite{lasserre2001global,parrilo2000structured}.}, 
which enjoys provable and best-possible approximation ratio of 0.878 in polynomial time for graphs with non-negative edge-weights. Any classical algorithm cannot attain a provable approximation of better than 0.878, assuming that the Unique Games Conjecture is true \cite{Khot02,MOO05,KKMO07}. 


In 1996, Karloff \cite{K99} demonstrated a sequence of instances where the instance-specific approximation ratio achieved by the GW algorithm approaches the 0.878 approximation guarantee. These constructed instances are known to be highly symmetric and in particular, are vertex transitive.\footnote{A graph $G = (V,E)$ is considered to be vertex-transitive if, for any $u,v \in V$ there exists a graph automorphism $f:V \to V$ such that $f(u) = v$.} In Section \ref{sec:GW_Alg}, we use Karloff's construction and a proof of one of his conjectures \cite{BCIM18} to enumerate a collection of classically challenging instances that are suitably-sized for current and near-term quantum devices.


In Section \ref{sec:stronglyRegularApprox}, we prove that the GW algorithm also achieves a low instance-specific approximation ratio for a special class of strongly-regular graphs. In particular, for strongly-regular graphs parameterized by $(n,k,\lambda, \mu)$ with $n=4(3t+1), k=3(t+1), \lambda=2, \mu=t+1$ for some integer $t$, we prove (Theorem \ref{thm:q3tAR}) that the GW algorithm yields a 0.912-approximation whenever $\text{Max-Cut}(G) = \frac{2}{3}|E|$ (which is the case for nearly all graphs in this family). In general, strongly-regular graphs are not necessarily as symmetric as the instances constructed by Karloff \cite{K99}; however, they do have a local symmetry which quantum researchers have been able to exploit in the context of quantum walks \cite{janmark2014global}.

In regards to symmetry, Herrman et al. \cite{HTOLHS21} ran numerical simulations for Max-Cut QAOA, up to depth $p=3$, for all non-isomorphic graphs with 8 or fewer nodes, in order to find correlations between various graph properties and performance metrics (including the approximation ratio and the probability of observing the maximum cut); in particular, they found that QAOA performs empirically better on graphs with higher amounts of symmetry, thus further motivating the investigation of the classes of instances considered in this work.




In Section \ref{sec:qaoaPerformanceInterestingInstances}, we discuss QAOA's performance on the instances in Section \ref{sec:GW_Alg} where the GW algorithm provably performs poorly. We prove (Theorem \ref{thm:karloffQAOAApproxRatio}) that depth-1 QAOA achieves an instance-specific approximation ratio that approaches 0.592 for the sequence of graphs constructed by Karloff; thus, a higher-depth or some modification of the QAOA algorithm (e.g., using warm-starts \cite{TFHMG20,TMGMG21,egger2020warm}) is required in order to demonstrate some form of quantum advantage for these instances. As for the family of strongly-regular graphs defined above, we computationally verified that depth-1 QAOA achieves instance-specific approximation ratios in $[0.82, 0.90]$ for each instance.


We close the paper with a discussion on the performance of classical and quantum algorithms on perturbed instances generated using strongly-regular graph families and Karloff graphs (included in a github repository \href{https://github.com/swati1729/CI-QuBe}{CI-Qube}). We conjecture that the mixed-weight graphs form an important class for benchmarking, using preliminary evidence from \cite{dunning2018}.



\section{Preliminaries}
\label{sec:preliminaries}
Given a graph $G=(V,E)$ with weights $w:E \to \mathbb{R}$ and $|V|=n$, the goal of the weighted Max-Cut problem is to partition the vertices into two parts so that the sum of weights of edges across the parts is maximized; letting each vertex correspond to a $\{-1,+1\}$ decision variable corresponding to which side of the cut it is on, the Max-Cut problem can be formulated as $$\text{Max-Cut}(G) = \max_{\substack{x \in \mathbb{R}^{n}:\\ x_i \in \{-1,1\}}} \frac{1}{2}\sum_{(i,j) \in E} w_{ij}(1-x_i\cdot x_j).$$ In the seminal work by Goemans and Williamson \cite{GW95}, they relax this formulation to  \begin{equation}\max_{\substack{x \in \mathbb{R}^{n}:\\ x_i \in \mathbb{R}^n \\ \Vert x_i \Vert  = 1}} \frac{1}{2}\sum_{(i,j) \in E} w_{ij}(1-x_i\cdot x_j), \tag{VP}\end{equation} i.e., each $x_i$ is now a unit-length vector in $\mathbb{R}^n$. The vector-program (VP) above can be reformulated as the following semidefinite program (SDP):
\begin{equation}
    \begin{array}{lll@{}ll}
		&\text{maximize}  &  & \frac{1}{2}\sum_{(i,j) \in E} w_{ij}(1-Y_{ij}) &\\
		&\text{subject to}&
		& Y_{ii} = 1,  & \forall i \in [n],\\
		& &   & Y \in \mathbb{S}^n_+\,, &
	\end{array} \tag{SDP}
\end{equation}
where $\mathbb{S}^n_+$ is the set of all $n \times n$ positive semidefinite matrices \cite{GW95}. Given an optimal solution $Y$ to (SDP), one can obtain an optimal solution to (VP) by considering the Cholesky factorization $Y = x^Tx$ with $x$ being an $n \times n$ matrix and letting $x_i$ be the $i$th column of $x$ for each $i \in [n]$. Goemans and Williamson show that hyperplane rounding of $x$ (i.e. selecting a random hyperplane through the origin of $\mathbb{R}^n$ and partitioning the vertices into two sets according to which side of the hyperplane they lie on) yields an expected cut value of
$$\text{HP(x)} = \sum_{(i,j) \in E} \frac{w_{i,j}}{\pi} \arccos(x_i \cdot x_j),$$
and yields a worst-case approximation ratio of $0.878$ for the GW algorithm \cite{GW95}.

For graphs $G$ with non-negative edge weights, we define the instance-specific approximation ratio for the GW algorithm as $\alpha_{GW,G} = \frac{\text{HP}(x)}{\text{Max-Cut}(G)}$ where $Y=x^Tx$ is an optimal solution to (SDP).

\subsection{Quantum Approximate Optimization Algorithm}
Farhi et al. \cite{FGG14} proposed the Quantum Approximate Optimization Algorithm (QAOA) as a general framework for obtaining approximate solutions to certain classes of combinatorial optimization problems, including Max-Cut. The QAOA algorithm is a quantum algorithm parametrized by a circuit depth $p \in \mathbb{Z}^{+}$ and variational parameters $\gamma,\beta \in \mathbb{R}^p$ which are tuned by a classical computer. We let $F_p(\gamma,\beta)$ denote the expected cut value of depth-p QAOA at a particular choice of $\gamma$ and $\beta$, and $F_p(G) = \max_{\gamma,\beta \in \mathbb{R}^p} F_p(\gamma,\beta)$ denote the expected cut value of depth-$p$ QAOA at optimal choice of $\gamma$ and $\beta$. QAOA performance in monotonic in the sense that if $q \geq p$, then $F_{q}(G) \geq F_p(G)$ \cite{FGG14}. We direct the reader to \cite{FGG14} for more details about QAOA.

\section{Small Instances from Karloff's Construction with Low GW Approximation Ratios}
\label{sec:GW_Alg}
We review a known graph family constructed by Karloff in \cite{K99} where the GW algorithm attains the tight approximation ratio of 0.878 as the graph size increases. The smallest instance using Karloff's construction has 924 nodes, which is too large for most current quantum devices. Although it is possible to classically calculate QAOA expectations for large instances by exploiting the locality of QAOA \cite{augustino2024strategies}, such strategies become computationally infeasible as the degrees of the vertices and the number of QAOA layers $p$ increases; in particular, the study by \cite{augustino2024strategies} considers 3-regular graphs whereas the smallest instance using Karloff's construction is a 225-regular graph.

Instead, in this work, we studied the construction to increase the size of the graph family by showing properties of the eigenvalues of the adjacency matrix of the graphs in Karloff's construction \cite{BCIM18}. This allowed us to expand the set of graphs where the GW algorithm has low approximation ratio to include graphs of 20 nodes and above (see Table \ref{tab:karloff}).

\subsection{Review of Karloff's Construction}
Goemans and Williamson \cite{GW95} showed that for a graph $G=(V,E)$ with non-negative edge weights,
$$\alpha_{\text{GW},G} \ge \alpha^*, \text{ where } \alpha^* := \frac{2}{\pi}\min_{\theta \in [0, \pi]} \frac{\theta}{ 1 - \cos \theta};$$  calculating $\alpha$ one finds that $0.878 < \alpha^* < 0.879$.

However, the typical analysis of Goemans-Williamson does not show that this is tight; that is, could there perhaps be a constant $\beta^* > \alpha^*$ such that $\alpha_{\text{GW},G} \geq \beta^*$ for all graphs $G$? Karloff \cite{K99} showed that the GW algorithm can indeed guarantee no better than an $\alpha^* \approx 0.878$ approximate solution, by constructing simple graphs for which the expected performance of the algorithm is arbitrarily close to $\alpha^*$.

These challenging instances for the Goemans-Williamson algorithm are explained next. For non-negative integers $b \leq t \leq m$, let $J(m,t,b)$ denote the graph with vertex set ${[m] \choose t}$, i.e., the vertices are all $t$-element subsets of $[m]$; two distinct vertices/subsets $S$ and $T$ of $J(m,t,b)$ are adjacent if and only if they have exactly $b$ elements in common, i.e. $|S \cap T| = b$. Of particular interest are graphs $J(m,t,b)$, with $m$ even, $t = m/2$.

By choosing $m,t,b$ appropriately as seen in Theorem \ref{Thm-Karloff-exists}, Karloff shows that this construction yields instances that are arbitrarily close to the $\alpha^* \approx 0.878$ bound.

\begin{theorem}[Karloff \ \cite{K99}]\label{Thm-Karloff-exists} There exists an optimal solution $Y$ of the GW SDP relaxation such that for each $\varepsilon > 0$, there are $b$ and $m$, $m$ even and positive and $0 \leq b \leq m/12$, such that $\alpha_{\text{GW},G} \leq \alpha^*+\varepsilon$ where $G = J(m,m/2,b)$.
\end{theorem}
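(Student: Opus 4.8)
The plan is to analyze the instance-specific approximation ratio $\alpha_{\text{GW},G} = \frac{\text{HP}(x)}{\text{Max-Cut}(G)}$ on the Johnson-type graphs $G = J(m,m/2,b)$ by computing each of the two quantities separately. The graph is vertex-transitive and highly symmetric, which is the structural feature I would exploit throughout. First I would identify a canonical optimal SDP solution $Y$: by symmetry, the optimal Gram matrix should have entries $Y_{ST}$ depending only on the intersection size $|S \cap T|$, so that all unit vectors $x_S$ sit on a sphere with inner products $x_S \cdot x_T$ that are a function of $|S \cap T|$ alone. The natural candidate is to embed each vertex/subset $S$ via its (centered, normalized) indicator vector in $\mathbb{R}^m$; for adjacent vertices (those with $|S \cap T| = b$) this yields a specific inner product value, call it $\rho = \rho(m,b)$. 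I would verify that this symmetric embedding is actually optimal for (SDP), likely by exhibiting a matching dual certificate or by invoking the symmetry-reduction argument that the optimum can be taken invariant under the automorphism group.

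\medskip

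Next I would compute the hyperplane-rounding value. Since every edge has $x_S \cdot x_T = \rho$, the rounded cut value is simply $\text{HP}(x) = |E| \cdot \frac{1}{\pi}\arccos(\rho)$, with all edges contributing identically. In parallel I would compute $\text{Max-Cut}(G)$ and $\text{SDP}(G) = \frac{1}{2}|E|(1-\rho)$ for this family. The ratio $\alpha_{\text{GW},G}$ then reduces to the clean expression
\[
\alpha_{\text{GW},G} = \frac{|E|\,\tfrac{1}{\pi}\arccos(\rho)}{\text{Max-Cut}(G)},
\]
so the whole problem collapses to controlling $\rho(m,b)$ and $\text{Max-Cut}(G)$ as functions of $m$ and $b$.

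\medskip

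The heart of the argument is then an asymptotic/limiting analysis. Following Karloff, I would choose the parameters $m$ and $b$ (with $m$ even, $t=m/2$, and $0 \le b \le m/12$) so that the inner product $\rho(m,b)$ approaches the worst-case angle $\theta^*$ achieving the minimum in $\alpha^* = \frac{2}{\pi}\min_{\theta}\frac{\theta}{1-\cos\theta}$; that is, $\arccos(\rho) \to \theta^*$ along an appropriate sequence. For this to drive $\alpha_{\text{GW},G}$ down to $\alpha^*$, I would need $\text{Max-Cut}(G)$ to agree asymptotically with $\text{SDP}(G) = \frac{1}{2}|E|(1-\rho)$, i.e. the integrality gap must close along this sequence so that the $\frac{1-\cos\theta}{2}$ factor in the SDP value cancels against the Max-Cut, leaving exactly the $\frac{2}{\pi}\frac{\theta}{1-\cos\theta}$ form. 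The given bound $0 \le b \le m/12$ is precisely the parameter regime that makes both the target angle and this tightness achievable.

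\medskip

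I expect the main obstacle to be the Max-Cut computation, \emph{not} the SDP side: establishing a matching upper bound on $\text{Max-Cut}(G)$ so that $\text{Max-Cut}(G) \approx \text{SDP}(G)$ asymptotically. Since $J(m,m/2,b)$ is a dense, highly-regular graph, showing that no integral cut beats the SDP relaxation by more than a vanishing factor is the delicate step; this is where the eigenvalue structure of the adjacency matrix (the spectral bound on Max-Cut for regular graphs, refined by the specific eigenvalues of Johnson-type graphs) must be brought in to pin down $\text{Max-Cut}(G)$ tightly. Combining the angle convergence with this spectral Max-Cut bound, and then choosing $b,m$ to drive the product below $\alpha^* + \varepsilon$, completes the proof.
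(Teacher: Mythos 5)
Your proposal follows essentially the same route as Karloff's argument that the paper reviews: the centered $\pm 1/\sqrt{m}$ indicator embedding (giving inner product $\rho = 4b/m - 1$ on edges), optimality of this symmetric solution certified by a dual SDP certificate built from the smallest adjacency eigenvalue, the per-edge rounding value $\frac{1}{\pi}\arccos(\rho)$, and the choice $b/m \approx 0.0777 \le 1/12$ driving the ratio to $\alpha^*$. The only nuance is that the Max-Cut step is easier than you anticipate: since $\text{Max-Cut}(G) \le \text{SDP}(G)$ automatically (it is a relaxation), the same eigenvalue/dual certificate already yields the upper bound $|E|(1-2b/m)$, and a dictator cut (splitting subsets by membership of a fixed element of $[m]$) attains it exactly, so the integrality gap is exactly $1$ rather than only asymptotically.
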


The construction of the matrix $Y$ in Theorem \ref{Thm-Karloff-exists}, given by Karloff \cite{K99}, is as follows. For each vertex/subset $S$ of $J(m,m/2,b)$, the vector $w_S \in \mathbb{R}^m$ is constructed where the $i$th entry of $w_S$ is $+1$ if $i \in S$ and $-1$ if $i \notin S$. Then, each $w_S$ is rescaled by $1/\sqrt{m}$ so that the $w_S$'s are of unit length. Now, an $m \times n$ matrix $W$ is built by letting the column indexed by $S$ be the vector $x_S$. Finally, $Y$ is set to $Y = W^TW$. The feasibility of $Y$ (with respect to the GW relaxation) can be easily verified. Karloff shows that $Y$ is also an optimal solution provided that $0 \leq b \leq m/12$ (a condition in Theorem \ref{Thm-Karloff-exists}).


In order to better understand the instances that satisfy the conditions of Theorem \ref{Thm-Karloff-exists} and the GW algorithm's performance on such instances, we next review the specific approximation ratio that is achieved as a function of the parameters $m$ and $b$ used in Karloff's construction \cite{K99}.

\begin{theorem}[Karloff \ \cite{K99}]\label{Thm-Karloff} Let $m$ be an even positive integer and $G = J(m,m/2,b)$. If $0 \leq b \leq m/12$, then, using the optimal solution $\hat{Y}_G$,
\begin{align*}
&\alpha_{\text{GW},G} = \frac{\frac{1}{\pi}\arccos(\frac{4b}{m}-1)}{1-\frac{2b}{m}} = \frac{2}{\pi}\frac{\theta}{1-\cos(\theta)}\\
&\geq \min_{\theta' \in [0,\pi]} \frac{2}{\pi}\frac{\theta'}{1-\cos(\theta')} = \alpha^*, \text{ where } \theta = \arccos\left(\frac{4b}{m}-1\right).\end{align*}
\end{theorem}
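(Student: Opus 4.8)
The plan is to evaluate separately the three quantities entering the instance-specific ratio $\alpha_{\text{GW},G} = \text{HP}(x)/\text{Max-Cut}(G)$ for $G = J(m,m/2,b)$, where $x$ is the Cholesky factor of the optimal SDP solution $\hat{Y}_G = W^TW$ guaranteed by Theorem \ref{Thm-Karloff-exists}. First I would compute the common edge inner product: for adjacent $S,T$ with $|S \cap T| = b$, partitioning the $m$ coordinates into the four cases ($i$ in both, in exactly one, in neither) gives $w_S \cdot w_T = 4b - m$, so that $x_S \cdot x_T = \tfrac{4b}{m} - 1$ for \emph{every} edge. Because all edges share this inner product and unit weight, $\text{HP}(x) = \frac{|E|}{\pi}\arccos\!\left(\tfrac{4b}{m}-1\right)$, while the SDP objective evaluated at $\hat{Y}_G$ is $\frac12\sum_{(S,T)\in E}\bigl(1 - (\tfrac{4b}{m}-1)\bigr) = |E|\bigl(1 - \tfrac{2b}{m}\bigr)$.

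The crux is then to show $\text{Max-Cut}(G) = |E|\bigl(1 - \tfrac{2b}{m}\bigr)$, since both the rounding value and the relaxation value have already been tied to this quantity. The upper bound is immediate: as (SDP) is a relaxation and $\hat{Y}_G$ is optimal by Theorem \ref{Thm-Karloff-exists}, we get $\text{Max-Cut}(G) \le |E|\bigl(1 - \tfrac{2b}{m}\bigr)$. For the matching lower bound I would use an averaging argument over the $m$ ``coordinate cuts,'' where the $i$-th cut places $S$ on one side precisely when $i \in S$. An edge $(S,T)$ is cut by the $i$-th coordinate cut exactly when $i \in S \triangle T$, and $|S \triangle T| = m - 2b$ for every edge; hence the average value of a coordinate cut over $i \in [m]$ equals $|E| \cdot \frac{m-2b}{m} = |E|\bigl(1 - \tfrac{2b}{m}\bigr)$. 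Consequently some coordinate cut attains at least this value, forcing $\text{Max-Cut}(G) \ge |E|\bigl(1 - \tfrac{2b}{m}\bigr)$ and hence equality. (Invariance under the $S_m$-action permuting $[m]$ even shows all coordinate cuts are equal, but only the averaging bound is needed.)

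Combining the three computations yields $\alpha_{\text{GW},G} = \frac{(|E|/\pi)\arccos(\frac{4b}{m} - 1)}{|E|(1 - \frac{2b}{m})} = \frac{\frac{1}{\pi}\arccos(\frac{4b}{m}-1)}{1 - \frac{2b}{m}}$. Setting $\theta = \arccos\!\left(\tfrac{4b}{m}-1\right)$, so that $\cos\theta = \tfrac{4b}{m}-1$ and therefore $1 - \tfrac{2b}{m} = \tfrac12(1 - \cos\theta)$, this collapses to $\frac{2}{\pi}\frac{\theta}{1 - \cos\theta}$. Finally, the hypothesis $0 \le b \le m/12$ gives $\tfrac{4b}{m}-1 \in [-1,-\tfrac23]$, so $\theta \in [0,\pi]$ and the value is bounded below by $\min_{\theta' \in [0,\pi]}\frac{2}{\pi}\frac{\theta'}{1-\cos\theta'} = \alpha^*$, as defined in the review of Karloff's construction.

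The main obstacle is the Max-Cut evaluation — and, more precisely, recognizing that the coordinate-cut/averaging construction exactly meets the SDP relaxation bound. The conceptual content is that these instances have \emph{no integrality gap in value}: the relaxation is tight, and all of the approximation loss is incurred in the hyperplane rounding step (through the gap between $\theta/\pi$ and $(1-\cos\theta)/2$) rather than in the relaxation itself.
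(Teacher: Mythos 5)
Your proposal is correct, but there is no line-by-line comparison to make with the paper, because the paper does not prove this theorem at all: it is quoted from Karloff \cite{K99}, and the surrounding text only describes the primal construction $\hat{Y}_G = W^TW$ and remarks that the hypothesis $0 \leq b \leq m/12$ enters through the feasibility of a dual SDP solution built from the smallest-eigenvalue bound (Theorem \ref{thm:smallestEigenvalue}). What you have written is, in effect, a reconstruction of Karloff's own argument, and it is sound: the edge inner product $\frac{4b}{m}-1$, the rounding value $\frac{|E|}{\pi}\arccos(\frac{4b}{m}-1)$, the SDP objective $|E|(1-\frac{2b}{m})$, and the identity $1-\frac{2b}{m} = \frac{1}{2}(1-\cos\theta)$ are all exactly right. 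The genuinely valuable step you supply --- which the paper never derives, quoting the cut value only later as Equation (\ref{eqn:maxcut_GW}) --- is the evaluation $\text{Max-Cut}(G) = |E|\left(1-\frac{2b}{m}\right)$, obtained by sandwiching: the relaxation upper bound (valid because $\hat{Y}_G$ is SDP-optimal) against the lower bound from averaging over the $m$ coordinate cuts, each edge being cut by exactly $|S \triangle T| = m-2b$ of them. This is also the conceptually correct way to see that these instances have no integrality gap in value, so that the entire $0.878$ loss is a rounding loss. Two small caveats worth recording. First, the one deep ingredient --- optimality of $\hat{Y}_G$ when $b \leq m/12$ --- you take as given from Theorem \ref{Thm-Karloff-exists}; that is legitimate here, since the statement itself says ``using the optimal solution $\hat{Y}_G$'' and the paper asserts Karloff established this optimality, but a fully self-contained proof would still require constructing the dual solution from Theorem \ref{thm:smallestEigenvalue}. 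Second, your final appeal to $b \leq m/12$ (to place $\cos\theta$ in $[-1,-\frac{2}{3}]$) is not actually where that hypothesis does its work: $\theta = \arccos(\frac{4b}{m}-1)$ lies in $[0,\pi]$ for any admissible $b$, so the inequality against $\alpha^*$ is automatic; the hypothesis is needed only for the SDP optimality you invoked earlier.
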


Numerically calculating the minimizer $\theta^*$ in the inequality in Theorem \ref{Thm-Karloff} yields $\theta^* \approx 2.33112$. Thus, the closer $\theta = \arccos(4b/m-1)$ is to $\theta^* \approx 2.33112$, the lower the instance-specific approximation ratio. Equivalently, in order to minimize the instance-specific approximation ratio, the ratio $b/m$ should be chosen to be as close to $\frac{1}{4}(\cos(\theta^*)+1) \approx 0.0777$ as possible; note that this can be achieved by picking $m$ large enough and then picking a suitable $b$. Furthermore, if one picks $b/m \approx 0.0777$, then the conditions of Theorem \ref{Thm-Karloff} still hold as $b \approx 0.0777m \leq m/12$. Since it is clear that $b$ and $m$ can be chosen to make the approximation ratio arbitrarily close to $\alpha^*$, Theorem \ref{Thm-Karloff-exists} follows.

\subsection{Identification of Small Instances Using Karloff's Approach}
\label{sec:small_instances_karloff}
Theorem \ref{Thm-Karloff} seems to yield a promising approach for finding instances $G$ where $\alpha_{\text{GW},G}$ is small; however, Theorem \ref{Thm-Karloff} is not sufficient for finding small instances that are feasible for near-term quantum computers. To illustrate why this is the case, we first consider the case where $b = 0$. In this case, $\theta = \arccos(-1) = \pi$ and thus $\alpha_{\text{GW},J(m,m/2,0)}$ $ = \frac{2}{\pi} \frac{\pi}{1-\cos(\pi)} = 1$ which is not very interesting. If $b > 0$, then the condition $b \leq m/12$ in Theorem \ref{Thm-Karloff} implies that $m \geq 12$ in which case the graph $J(m,m/2,b)$ would have $n \geq {12 \choose 6} = 924$ nodes which is much more than the number of qubits in most modern quantum computers.

One way to obtain smaller interesting instances is to find a way to relax the condition $b \leq m/12$ in Theorem \ref{Thm-Karloff}. This condition is needed since Theorem \ref{Thm-Karloff} invokes the following theorem.

\begin{theorem}[Karloff \ \cite{K99}]\label{Thm-Karloff-2}
\label{thm:smallestEigenvalue}
 Let $m$ be an even positive integer and let $0\leq b \leq m/12$. The smallest eigenvalue of the adjacency matrix of $J(m,m/2,b)$ is
$ {m/2 \choose b}^2 \left[ \frac{4b}{m}-1 \right].$
\end{theorem}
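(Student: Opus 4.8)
The plan is to exploit the fact that $J(m,m/2,b)$ is a graph in the Johnson association scheme on $(m/2)$-subsets of $[m]$: its adjacency matrix $A$ is exactly the associate matrix joining two subsets $S,T$ with $|S\cap T|=b$, and the scheme supplies a common orthogonal decomposition $\mathbb{R}^n = V_0\oplus V_1 \oplus\cdots\oplus V_{m/2}$ (where $n=\binom{m}{m/2}$) on which every associate matrix, and hence $A$, acts as a scalar. Thus the spectrum of $A$ consists of exactly $m/2+1$ values $\lambda_0,\ldots,\lambda_{m/2}$, one per eigenspace, and the task splits into (i) identifying $\lambda_j$ on each $V_j$, and (ii) showing that under the hypothesis $0\le b\le m/12$ the minimum is attained at $j=1$ and equals $\binom{m/2}{b}^2\!\left(\tfrac{4b}{m}-1\right)$.

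The eigenvalue on $V_1$ can be computed by hand, and this is the clean part. For $\ell\in[m]$ let $u_\ell\in\mathbb{R}^n$ be the membership vector with $(u_\ell)_S=1$ iff $\ell\in S$; the $u_\ell$ span $V_0\oplus V_1$, with $V_0=\spn\{\mathbf 1\}$ and $V_1 = \spn\{u_\ell\}\cap \mathbf 1^\perp = \{\sum_\ell c_\ell u_\ell : \sum_\ell c_\ell = 0\}$. Counting the subsets $T$ with $|S\cap T|=b$ that contain a fixed $\ell$, split according to whether $\ell\in S$, yields
\[
A u_\ell \;=\; \binom{m/2}{b}\binom{m/2-1}{b}\,\mathbf 1 \;+\; \binom{m/2}{b}\Big(\binom{m/2-1}{b-1}-\binom{m/2-1}{b}\Big)\,u_\ell .
\]
Applying this to $v=\sum_\ell c_\ell u_\ell\in V_1$ annihilates the $\mathbf 1$-term (since $\sum_\ell c_\ell=0$), so $\lambda_1 = \binom{m/2}{b}\big(\binom{m/2-1}{b-1}-\binom{m/2-1}{b}\big)$. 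A one-line Pascal-rule simplification turns the bracketed difference into $\binom{m/2}{b}\tfrac{4b-m}{m}$, giving exactly $\lambda_1=\binom{m/2}{b}^2\!\left(\tfrac{4b}{m}-1\right)$, which is negative precisely because $b\le m/12<m/4$.

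It remains to show this is the \emph{smallest} eigenvalue, and this is the main obstacle. For general $j$ the scalar $\lambda_j$ is the Eberlein (dual Hahn) polynomial giving the action of the intersection-$b$ associate matrix on $V_j$, an alternating-sum expression in $j,b,m$; on $V_0$ one simply recovers the valency $\binom{m/2}{b}^2>0$. The real work is to prove $\lambda_j \ge \lambda_1$ for all $2\le j\le m/2$, and here the hypothesis $b\le m/12$ is used essentially: it delimits exactly the regime in which the minimizer of $j\mapsto\lambda_j$ sits at $j=1$, whereas for larger ratios $b/m$ the most negative eigenvalue migrates to a higher eigenspace. I expect this step to require either a sign/monotonicity analysis of the Eberlein polynomials in $j$ or a direct comparison bounding each $\lambda_j$ against $\lambda_1$, with the constraint $b\le m/12$ controlling the relevant lower-order terms; this is the quantitative heart of the argument and the portion that a subsequent relaxation of the bound (via the eigenvalue results of \cite{BCIM18}) would need to sharpen.
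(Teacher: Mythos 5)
You should know at the outset that the paper does not prove this statement at all: it is imported verbatim from Karloff \cite{K99} (and the hypothesis $0\le b\le m/12$ is later relaxed to $0\le b<m/4$ only by citing \cite{BCIM18}). So the comparison here is against Karloff's original argument rather than anything in the paper. Within that framing, your setup is the right one and the part you actually executed is correct: $J(m,m/2,b)$ lives in the Johnson association scheme, so its adjacency matrix acts as a scalar $\lambda_j$ on each common eigenspace $V_j$, $0\le j\le m/2$; your two-case count of neighbors $T$ of $S$ containing a fixed $\ell$ (namely $\binom{m/2}{b}\binom{m/2-1}{b-1}$ when $\ell\in S$ and $\binom{m/2}{b}\binom{m/2-1}{b}$ when $\ell\notin S$) gives your displayed identity, and the simplification $\binom{m/2-1}{b-1}-\binom{m/2-1}{b}=\binom{m/2}{b}\frac{4b-m}{m}$ yields $\lambda_1=\binom{m/2}{b}^2\left(\frac{4b}{m}-1\right)$ exactly as claimed.

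The genuine gap is that this is the easy half of the theorem, and it nowhere uses the hypothesis $b\le m/12$. The entire content of the statement is the claim you defer: that $\lambda_j\ge\lambda_1$ for all $2\le j\le m/2$, so that $\lambda_1$ is the \emph{smallest} eigenvalue. Your proposal names the tool (sign/monotonicity analysis of the Eberlein/dual Hahn values) and says you ``expect'' it to work, but contains no formula for $\lambda_j$, no comparison argument, and no indication of how $b\le m/12$ enters the estimates. That comparison is where essentially all of Karloff's effort lies, and its difficulty is not incidental: sharpening exactly this step from $b\le m/12$ to $b<m/4$ was an open conjecture for roughly two decades until Brouwer et al.\ \cite{BCIM18} resolved it --- which is the very fact this paper exploits to build its small instances. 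Your own remark that the minimizer of $j\mapsto\lambda_j$ migrates to higher $j$ for larger $b/m$ shows why a quantitative crossover analysis is unavoidable; nothing in the proposal performs it. As written, you have established only that $\binom{m/2}{b}^2\left(\frac{4b}{m}-1\right)$ is \emph{an} eigenvalue of $J(m,m/2,b)$, not that it is the minimum one.
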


Karloff used the above theorem to construct a solution to the dual SDP that is feasible and has the same objective value as the construction of the primal $Y$ described earlier, thus proving the optimality of $Y$; we utilize a similar technique later in Section \ref{sec:stronglyRegularApprox} on strongly-regular graphs. The proof of feasibility of the dual is the only place where the condition $0 \leq b \leq m/12$ is used in Karloff's proofs; meaning that if the condition can be relaxed in Theorem \ref{thm:smallestEigenvalue} above, that would expand the set of instances for which Theorem \ref{Thm-Karloff} holds.

Karloff conjectured that the condition $0 \leq b \leq m/12$ in Theorem \ref{thm:smallestEigenvalue} could be relaxed to the condition $0 \leq b < m/4$. We found that this conjecture was in fact proven by Brouwer et al. in 2018 (see remark after Theorem 3.10 in \cite{BCIM18}), therefore the same condition in Theorem \ref{Thm-Karloff} can also be relaxed;
 we restate Theorem \ref{Thm-Karloff} with the relaxed inequality below.

\begin{theorem}[Karloff \ \cite{K99}]\label{Thm-Conj} Let $m$ be an even positive integer and $G = J(m,m/2,b)$. If $0 \leq b < m/4$, then 
\begin{align*}
&\alpha_{\text{GW},G} = \frac{\frac{1}{\pi}\arccos(\frac{4b}{m}-1)}{1-\frac{2b}{m}} = \frac{2}{\pi}\frac{\theta}{1-\cos(\theta)}\\
&\geq \min_{\theta' \in [0,\pi]} \frac{2}{\pi}\frac{\theta'}{1-\cos(\theta')} = \alpha^*, \text{ where } \theta = \arccos\left(\frac{4b}{m}-1\right).
\end{align*}

\end{theorem}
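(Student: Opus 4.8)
The plan is to recognize that Theorem~\ref{Thm-Conj} is word-for-word Theorem~\ref{Thm-Karloff} with the hypothesis weakened from $b\le m/12$ to $b<m/4$, and to re-run Karloff's proof, replacing its single appeal to the restricted eigenvalue bound of Theorem~\ref{thm:smallestEigenvalue} by the relaxed version now available from Brouwer et al.~\cite{BCIM18}. To confirm that this is all that is required, I would first recompute, directly from the construction $\hat Y_G=W^TW$, the two quantities that determine $\alpha_{\text{GW},G}$. For adjacent subsets $S,T$ (with $|S|=|T|=m/2$ and $|S\cap T|=b$), the $\pm 1$ vectors $w_S,w_T$ agree in exactly $2b$ coordinates and disagree in the remaining $m-2b$, so after rescaling by $1/\sqrt m$ one gets $x_S\cdot x_T=\tfrac{4b}{m}-1$ and pairwise angle $\theta=\arccos(\tfrac{4b}{m}-1)$. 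This yields the hyperplane value $\text{HP}(x)=\tfrac{|E|}{\pi}\theta$ and the primal objective $\tfrac12\sum_{(i,j)\in E}(1-(\hat Y_G)_{ij})=|E|(1-\tfrac{2b}{m})$, neither computation referring in any way to the size of $b$.

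Next I would certify that $\hat Y_G$ is SDP-optimal and that the relaxation is tight. Writing the objective as $\tfrac14\langle L,Y\rangle$ for the Laplacian $L$, the dual asks for a diagonal $\operatorname{Diag}(\gamma)\succeq\tfrac14 L$ minimizing $\sum_i\gamma_i$. Since $J(m,m/2,b)$ is $k$-regular with $k=\binom{m/2}{b}^2$, we have $L=kI-A$, and the symmetric choice $\gamma=c\mathbf 1$ with $c=\tfrac14\lambda_{\max}(L)=\tfrac14(k-\lambda_{\min}(A))$ is dual-feasible \emph{exactly} when the number used for $\lambda_{\min}(A)$ is genuinely the least eigenvalue of $A$. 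Substituting $\lambda_{\min}(A)=\binom{m/2}{b}^2(\tfrac{4b}{m}-1)=k(\tfrac{4b}{m}-1)$ gives dual value $nc=\tfrac{nk}{2}(1-\tfrac{2b}{m})=|E|(1-\tfrac{2b}{m})$, matching the primal and hence forcing $\operatorname{SDP}(G)=|E|(1-\tfrac{2b}{m})$. Together with the matching integral optimum (the relaxation being tight on these instances, as in Karloff's analysis, a step insensitive to $b$), this gives $\text{Max-Cut}(G)=|E|(1-\tfrac{2b}{m})$ and therefore $\alpha_{\text{GW},G}=\text{HP}(x)/\text{Max-Cut}(G)=\tfrac{\theta/\pi}{1-2b/m}$. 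Using $1-\tfrac{2b}{m}=\tfrac12(1-\cos\theta)$ rewrites this as $\tfrac{2}{\pi}\tfrac{\theta}{1-\cos\theta}$, and the inequality $\ge\alpha^*$ is immediate since $\alpha^*$ is by definition the minimum of this expression over $\theta\in[0,\pi]$.

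The one place the hypothesis on $b$ enters is the dual-feasibility step: the certificate $cI\succeq\tfrac14 L$ is valid only if $k(\tfrac{4b}{m}-1)$ is the \emph{smallest} eigenvalue of $A$, not merely an eigenvalue of it. Karloff established this only for $0\le b\le m/12$, so the main obstacle is to rule out any eigenvalue of $J(m,m/2,b)$ dipping below $k(\tfrac{4b}{m}-1)$ as $b$ grows toward $m/4$ (where this value tends to $0^-$ and the spectral margin shrinks). This is precisely the content of Karloff's conjecture, which I would discharge by citing the result of Brouwer et al.~\cite{BCIM18}: they prove that the eigenvalue formula of Theorem~\ref{thm:smallestEigenvalue} remains valid throughout $0\le b<m/4$. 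Because every other ingredient of Karloff's argument is insensitive to the magnitude of $b$, inserting this relaxed eigenvalue bound carries the conclusion of Theorem~\ref{Thm-Karloff} verbatim to the range $0\le b<m/4$, establishing Theorem~\ref{Thm-Conj}.
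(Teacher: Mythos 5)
Your proof is correct and follows essentially the same route as the paper: the paper likewise observes that the eigenvalue bound (Theorem \ref{thm:smallestEigenvalue}), used only to certify feasibility of the dual SDP solution, is the sole place where Karloff's proof of Theorem \ref{Thm-Karloff} needs $0 \leq b \leq m/12$, and then invokes Brouwer et al.'s proof of Karloff's conjecture \cite{BCIM18} to relax this condition to $0 \leq b < m/4$. The only difference is explicitness: you re-derive the primal value, the dual certificate $cI \succeq \frac{1}{4}L$, and the matching integral cut to verify that every other step is insensitive to the size of $b$, whereas the paper asserts this by deferring to Karloff's analysis.
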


From Theorem \ref{Thm-Conj}, we generate instances $J(m,m/2,b)$ where $m$ is as small as $m=6$ and whose GW instance-specific approximation ratios are at most 0.940; these instances and approximation ratios are displayed in Table \ref{tab:karloff}. These smaller instances are suitable for near-term NISQ devices.

\begin{table}
\centering
\begin{tabular}{|c|c|c|c|c|c|}
\hline
     Instance $G$ &  Nodes & Edges & Deg. & $\alpha_{\text{GW},G}$ & $\alpha_\text{QAOA}$\\\hline
     $J(6,3,1)$ & 20 & 90 & 9 & $0.9123$ & $0.8492$ \\
     $J(8,4,1)$ & 70 & 560 & 16 & $0.8889$ & $0.7694$\\
     $J(10,5,1)$ & 252 & 3150 & 25 & $0.8810$ & $0.7016$ \\
     $J(10,5,2)$ & 252 & 12600 & 100 & $0.9402$ & $0.8526$  \\
    $J(12,6,1)$ & 924 & 16632 & 36 & $0.8787$ & $0.6611$ \\
    $J(12,6,2)$ & 924 &103950 & 225 & $0.9123$ & $0.7654$ \\
    \hline
\end{tabular}
\caption{\label{tab:karloff} A listing of small ($<1000$ nodes) instances using Karloff's construction. For each instance, we include the number of nodes, edges, the degree (of each node), and the theoretical instance-specific approximation ratios one would obtain if running the single-layer QAOA algorithm (see Section \ref{sec:qaoaPerformanceInterestingInstances}) and the GW algorithm on that instance (assuming an optimal solution of $Y$ as described earlier).}
\end{table}

\section{Provable Guarantees for the GW Algorithm on Strongly-Regular Graphs}
\label{sec:stronglyRegularApprox}
In Karloff's proof of Theorem \ref{Thm-Karloff}, he exploits the fact that, for any  fixed instance $J(m,m/2,b)$ with $0\leq b \leq m/12$, angles between adjacent vertices in the optimal GW SDP solution are all equal (with angle $\theta = \arccos\left( \frac{4b}{m}-1\right)$). This raises the question: \textit{are there other graphs whose optimal GW SDP solution has equal angles between adjacent vertices, and if so, what is the GW approximation ratio for such graphs}? This subsection answers this question in the affirmative: we prove that such a property holds for all strongly-regular graphs. Additionally, for a specific family of strongly-regular graphs (i.e. those parameterized by $n=4(3t+1), k=3(t+1), \lambda=2, \mu=t+1$ for some integer $t$) we show that the GW algorithm yields an instance-specific approximation ratio of 0.912 for nearly all instances in this family (Theorem \ref{thm:q3tAR}).

\subsection{General SRGs}
We first focus on general SRGs. First, in Proposition \ref{thm:equalAngles}, we generalize a portion of Karloff's proof to show that hyperplane rounding of the GW algorithm yields a cut with $\frac{\theta}{\pi}|E|$ edges in expectation for any unit-weight graph whose optimal GW SDP solution has equal angles $\theta$ between adjacent vertices. Then, we define the notion of strongly-regular graphs (Definition \ref{def:srg}) and exploit the well-known fact that the vertices of such graphs can be mapped onto a unit-hypersphere so that the angles between adjacent vertices are all equal (Theorem \ref{thm:srgToSphere}) \cite{seidel1979strongly}. We then prove (Theorem \ref{thm:optimality}) that this mapping corresponds to an optimal GW SDP solution by finding a feasible solution to the dual SDP with the same objective value. 

We begin with Proposition \ref{thm:equalAngles} which relates the angles in the SDP solution to the GW approximation ratio in the case that the graph has unit-weight edges and equal angles between all adjacent vertices in the optimal GW SDP solution.

\begin{prop}
\label{thm:equalAngles}
If $G=(V,E)$ is a unit-weight graph with optimal GW SDP solution $Y=x^Tx$ and if there exists $\theta \in \mathbb{R}$ such that $\arccos(x_u \cdot x_v) = \theta$ for all $(u,v) \in E$, then the GW algorithm (with hyperplane rounding on $x$) yields exactly $\frac{\theta}{\pi}|E|$ edges in expectation and obtains an instance-specific approximation ratio of at least $\frac{\theta}{\pi}$.
\end{prop}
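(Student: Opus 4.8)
The plan is to derive both conclusions directly from the hyperplane-rounding expression recalled in the Preliminaries, combined with the equal-angle hypothesis and the trivial bound that a cut can include at most all edges.

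First I would recall that the expected cut value of hyperplane rounding is $\text{HP}(x) = \sum_{(i,j)\in E} \frac{w_{ij}}{\pi}\arccos(x_i\cdot x_j)$. Because $G$ is unit-weight we have $w_{ij}=1$ for every edge, and the hypothesis supplies $\arccos(x_u\cdot x_v)=\theta$ for all $(u,v)\in E$; hence each term in the sum equals $\theta/\pi$. Summing over the $|E|$ edges gives $\text{HP}(x)=\frac{\theta}{\pi}|E|$ exactly, which establishes the first conclusion.

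For the approximation ratio, I would invoke the definition $\alpha_{\text{GW},G}=\text{HP}(x)/\text{Max-Cut}(G)$ together with the elementary fact that, for a unit-weight graph, any cut contains at most $|E|$ edges, so $\text{Max-Cut}(G)\le |E|$. Substituting the value of $\text{HP}(x)$ computed above then yields
\[
\alpha_{\text{GW},G}=\frac{(\theta/\pi)|E|}{\text{Max-Cut}(G)}\ge \frac{(\theta/\pi)|E|}{|E|}=\frac{\theta}{\pi},
\]
which is the desired lower bound.

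No substantial obstacle arises here: both parts reduce to a substitution into the known rounding formula followed by the bound $\text{Max-Cut}(G)\le|E|$. The only point requiring care is that $\text{HP}(x)$ and the instance-specific ratio $\alpha_{\text{GW},G}$ are both defined in terms of \emph{the same} optimal SDP solution $x$, which the hypothesis precisely guarantees has the equal-angle property; thus no appeal to a possibly different optimizer is needed, and the optimality of $Y=x^Tx$ enters only through this definitional link.
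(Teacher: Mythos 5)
Your proof is correct and follows essentially the same approach as the paper's: substituting the unit weights and equal-angle hypothesis into the hyperplane-rounding formula to get $\text{HP}(x) = \frac{\theta}{\pi}|E|$, then applying the trivial bound $\text{Max-Cut}(G) \leq |E|$ to obtain the ratio. Your added remark that the equal-angle property and the ratio's definition both refer to the same optimal solution $x$ is a nice clarification, but otherwise the arguments are identical.
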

\begin{proof}
    Let $G=(V,E)$ be a unit-weight graph with optimal GW SDP solution $Y=x^Tx$ such that, for some $\theta \in \mathbb{R}$, $\arccos(x_u \cdot x_v) = \theta$ for all $(u,v) \in E$. Recall that $\text{HP}(x)$ denotes the expected number of edges obtained from applying hyperplane rounding to $x$. Thus, \begin{align*}
    \text{HP}(x) =& \sum_{(i,j) \in E} \frac{1}{\pi}\arccos(x_i \cdot x_j) =  \sum_{(i,j) \in E} \frac{1}{\pi}\theta = \frac{\theta}{\pi}|E|.
\end{align*}

Note that since $\text{\mc}(G) \leq |E|$ for any unweighted graph, thus the approximation ratio of hyperplane round is 
    $\geq \frac{\theta}{\pi}$. \qedhere
\end{proof}

In 1963, Bose introduced the notion of strongly-regular graphs \cite{bose1963strongly}, parameterized by\footnote{In the literature for strongly-regular graphs, one typically uses the parameter $v$ instead of $n$ to represent the number of vertices; this was done to avoid confusion with the other notation used throughout this work. It should also be noted that for any fixed set of parameters $(n,k,\lambda,\mu)$, there may be more than one non-isomorphic strongly-regular graph with those parameters.} $n,k,\lambda,$ and $\mu$. These graphs are highly symmetric. From this symmetry, as seen by Theorem \ref{thm:srgToSphere}, Seidel \cite{seidel1979strongly} shows that the vertices of an $n$-node strongly-regular graphs can be mapped to an $(r-1)$-dimensional hypersphere with $r \leq n$ so that, points corresponding to adjacent vertices all have the same instance-dependent angle $\theta$.
\begin{definition}
\label{def:srg}
A unit-weight graph $G$ is a $\text{SRG}(n,k,\lambda,\mu)$, i.e., $G$ is a strongly-regular graph with parameters $n,k,\lambda,\mu$, if $G$  has $n$ nodes, $G$ is a $k$-regular graph, every pair of adjacent vertices in $G$ have $\lambda$ common neighbors, every pair of non-adjacent vertices in $G$ have $\mu$ common neighbors, and $G$ is neither a complete graph nor the complement of a complete graph. A strongly-regular graph $G$ is said to be \emph{primitive} if both $G$ and its complement are connected.
\end{definition}

\begin{theorem}[Seidel \ \cite{seidel1979strongly}]
\label{thm:srgToSphere}
    Let $G=(V,E)$ be a primitive strongly regular graph $\text{SRG}(n,k,\lambda,\mu)$. Let $\xi_2$ be the smallest eigenvalue of the adjacency matrix of $G$. Then there exists an integer $r\leq n$ and a function $f: V \to \mathbb{S}^{r-1}$ so that for all $(u,v) \in E$, $f(u) \cdot f(v) = \xi_2/k$.
\end{theorem}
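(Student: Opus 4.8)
The plan is to build the embedding directly from the spectral projection onto the smallest-eigenvalue eigenspace of the adjacency matrix $A$, and to verify the claimed inner product $\xi_2/k$ by invoking the quadratic relation satisfied by the eigenvalues of a strongly-regular graph. First I would recall the eigenvalue structure: since $G$ is $k$-regular, the all-ones vector is an eigenvector with eigenvalue $k$, and strong regularity gives $A^2 = kI + \lambda A + \mu(J - I - A)$. From this one reads off that $A$ has exactly two further eigenvalues $\xi_1 > \xi_2$, the roots of $x^2 - (\lambda - \mu)x - (k-\mu) = 0$, so that $\xi_1 + \xi_2 = \lambda - \mu$ and $\xi_1 \xi_2 = \mu - k$; primitivity is exactly what guarantees these three eigenvalues are distinct. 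Let $g$ denote the multiplicity of $\xi_2$.

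Next I would form the orthogonal projection onto the $\xi_2$-eigenspace by Lagrange interpolation,
$$E_2 = \frac{(A - kI)(A - \xi_1 I)}{(\xi_2 - k)(\xi_2 - \xi_1)},$$
which is symmetric, positive semidefinite (being a projection), and of rank $g$. The key structural point is that $E_2$ lies in the Bose--Mesner algebra spanned by $I$, $A$, and $J - I - A$ (because $A^2$ is a combination of these), so $E_2 = aI + bA + c(J - I - A)$ with constant coefficients. Consequently the diagonal of $E_2$ is constant and equal to $a$; taking traces and using $\mathrm{tr}(E_2) = g$ gives $a = g/n$. Hence $\tfrac{n}{g}E_2$ is positive semidefinite with all diagonal entries equal to $1$, i.e. it is the Gram matrix of unit vectors. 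Factoring $\tfrac{n}{g}E_2 = U^\top U$ with $U \in \mathbb{R}^{g \times n}$ and letting $f(v)$ be the $v$-th column of $U$ produces a map $f : V \to \mathbb{S}^{g-1}$, so I may take $r = g \le n$.

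Finally I would compute the value along an edge. Expanding $E_2$ and writing $D = (\xi_2 - k)(\xi_2 - \xi_1)$, the diagonal is $a = \tfrac{1}{D}\big(k + k\xi_1\big)$ (using $(A^2)_{vv} = k$), while for adjacent $u,v$ the entry is $b = \tfrac{1}{D}\big(\lambda - (k + \xi_1)\big)$ (using $(A^2)_{uv} = \lambda$ and $A_{uv} = 1$). Therefore $f(u)\cdot f(v) = b/a = \tfrac{\lambda - k - \xi_1}{k(1+\xi_1)}$ for every edge, with the factor $D$ cancelling. The only nonroutine step is to check that this equals $\xi_2/k$, i.e. that $\lambda - k - \xi_1 = (1 + \xi_1)\xi_2$: substituting $\xi_1 = (\lambda - \mu) - \xi_2$ reduces the left side to $\mu - k + \xi_2$, while $(1+\xi_1)\xi_2 = \xi_2 + \xi_1\xi_2 = \xi_2 + (\mu - k)$ by the product relation, so both sides agree. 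This yields $f(u)\cdot f(v) = \xi_2/k$ for all $(u,v) \in E$, as claimed. The main obstacle is thus confined to this final identity, which rests entirely on the SRG eigenvalue relations $\xi_1 + \xi_2 = \lambda - \mu$ and $\xi_1 \xi_2 = \mu - k$.
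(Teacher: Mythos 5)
Your proof is correct. One structural remark first: the paper does not prove this statement at all --- it is quoted as a known result and attributed to Seidel \cite{seidel1979strongly} --- so there is no in-paper argument to compare against; what you have produced is a self-contained proof of the cited result, along what is essentially the standard (and essentially Seidel's) route: the spherical embedding is the Gram factorization of the rescaled spectral idempotent $\tfrac{n}{g}E_2$ of the Bose--Mesner algebra. All the key steps check out. The relation $A^2 = kI + \lambda A + \mu(J-I-A)$ is the correct walk-counting identity; restricted to $\mathbf{1}^\perp$ it gives the quadratic $x^2-(\lambda-\mu)x-(k-\mu)=0$ with Vieta relations $\xi_1+\xi_2=\lambda-\mu$, $\xi_1\xi_2=\mu-k$; the Lagrange polynomial applied to the symmetric (hence diagonalizable) matrix $A$ is indeed the orthogonal projection onto the $\xi_2$-eigenspace; the entrywise computation gives $a=\tfrac{k(1+\xi_1)}{D}$ on the diagonal and $b=\tfrac{\lambda-k-\xi_1}{D}$ on edges; and your final identity is right: substituting $\xi_1=\lambda-\mu-\xi_2$ gives $\lambda-k-\xi_1=\mu-k+\xi_2$, while $(1+\xi_1)\xi_2=\xi_2+\xi_1\xi_2=\xi_2+\mu-k$, so $b/a=\xi_2/k$ as claimed. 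Two points you leave implicit and should state: (i) the construction needs $\xi_2$ to actually occur as an eigenvalue, i.e.\ $g\geq 1$; this holds because a connected regular graph with only two distinct eigenvalues is complete, and complete graphs are excluded by the SRG definition, while primitivity gives connectivity; (ii) dividing by $k(1+\xi_1)$ requires $\xi_1\neq -1$, which follows for primitive SRGs since $\mu<k$ forces $\xi_1\xi_2=\mu-k<0$ with $\xi_2<0$, hence $\xi_1>0$. Neither gap is serious; both are one-line fixes using exactly the primitivity hypothesis in the statement.
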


If one performs hyperplane rounding on the vectors obtained from Theorem \ref{thm:srgToSphere}, then the angles between vertices would be $\theta = \arccos(\xi_2/k)$, and following the proof of Theorem \ref{thm:equalAngles}, one would obtain an approximation ratio of at least $\theta / \pi = \frac{1}{\pi}\arccos(\xi_2/k)$. However, the performance of the GW algorithm is dependent on the \emph{optimal} solution to the SDP relaxation. Since the relationship between the optimal SDP solution and the equal-angular mapping above is not known a priori, it is not immediately clear if Theorem \ref{thm:srgToSphere} can be used to produce a non-trivial bound on the performance of GW on strongly-regular graphs; however, it can be shown that this equal-angular solution can be lifted to a higher dimension in which it is indeed an optimal SDP solution.
We prove this optimality using a standard technique in optimization: find a feasible solution to the dual optimization problem with the same objective value. 


For ease of notation, for feasible solutions $Y$ of the primal GW SDP, we let $z_P(Y)$ denote the objective value of the primal GW SDP at $Y$. Similarly, for feasible solutions $\zeta$ of the dual to the GW SDP, we let $z_D(\zeta)$ denote the objective value of the dual at $\zeta$. Additionally, we let $z_P^* = \max_{Y} z_P(Y)$ and $z_D^* = \min_{\zeta} z_D(\zeta)$ represent the optimal objective values of the GW SDP and its dual, where the max and min are taken over all feasible solutions of the SDP and its dual (respectively).

\begin{prop}
\label{thm:feasiblePrimalSRG}
Let $G$ be a $\text{SRG}(n,k,\lambda,\mu)$.  Let $\xi_2$ be the smallest eigenvalue of the adjacency matrix of $G$. Then there exists a feasible $Y$ for the GW SDP so that $z_P(Y) = \frac{|E|}{2}(1-\xi_2/k) $.
\end{prop}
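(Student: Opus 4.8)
The plan is to construct the required feasible matrix $Y$ explicitly as the Gram matrix of the equal-angular spherical embedding guaranteed by Seidel's theorem, and then read off the objective value directly. This mirrors Karloff's primal construction, where the hypercube vectors $w_S$ played the analogous role; here the vectors instead come from Theorem \ref{thm:srgToSphere} rather than from a combinatorial description of the vertices.

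First I would apply Theorem \ref{thm:srgToSphere} to obtain an integer $r \leq n$ and a map $f : V \to \mathbb{S}^{r-1}$ such that $f(u) \cdot f(v) = \xi_2/k$ for every edge $(u,v) \in E$. Collecting these as the columns of an $r \times n$ matrix $x$ (the $v$th column being $f(v)$), I would set $Y = x^T x$, so that $Y_{uv} = f(u) \cdot f(v)$. I would then verify feasibility for the GW SDP: being a Gram matrix of real vectors, $Y$ is automatically positive semidefinite, so $Y \in \mathbb{S}^n_+$; and since each $f(v)$ lies on the unit sphere, $Y_{vv} = \|f(v)\|^2 = 1$ for every $v \in V$, so the diagonal constraints hold.

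Finally I would compute the objective. Because $G$ is unit-weight ($w_{ij}=1$) and $Y_{ij} = \xi_2/k$ on every edge, the primal objective collapses to one identical term per edge:
\begin{align*}
z_P(Y) = \frac{1}{2} \sum_{(i,j) \in E} (1 - Y_{ij}) = \frac{1}{2} \sum_{(i,j) \in E} \left(1 - \frac{\xi_2}{k}\right) = \frac{|E|}{2}\left(1 - \frac{\xi_2}{k}\right),
\end{align*}
which is exactly the claimed value.

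There is no serious obstacle within this proposition itself: its entire content is the observation that Seidel's embedding furnishes a matrix that is simultaneously a valid Gram matrix (hence PSD), correctly normalized on the diagonal, and constant across edges. The genuinely hard work is deferred to the companion optimality statement, Theorem \ref{thm:optimality}, where one must exhibit a matching \emph{dual}-feasible solution of equal objective value to certify that this $Y$ is optimal and not merely feasible. The only two points I would be careful about here are the applicability of Theorem \ref{thm:srgToSphere} (stated for \emph{primitive} strongly-regular graphs) and the fact that $\xi_2 < 0$ for the smallest eigenvalue of a non-complete $k$-regular graph, which guarantees $z_P(Y) > |E|/2$ so that the bound is nontrivial.
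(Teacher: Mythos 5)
Your proposal is correct and follows essentially the same route as the paper: invoke Theorem \ref{thm:srgToSphere} to get the equal-angle spherical embedding, take $Y$ to be its Gram matrix, check feasibility, and evaluate the objective edge-by-edge. The only cosmetic difference is that the paper first pads the vectors $f(v)$ with zeros to land in $\mathbb{S}^{n-1}$ before forming $Y = x^T x$, a step your version correctly omits since zero-padding leaves the Gram matrix (and hence feasibility and the objective value) unchanged.
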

\begin{proof}
    Let $G = (V,E)$ be such a graph. By Theorem \ref{thm:srgToSphere}, there exists an integer $r \leq n$ and a function $f: V \to \mathbb{S}^{r-1}$ so that for all $(u,v) \in E$, we have $f(u) \cdot f(v) = \xi_2/k$. We can embed the solution onto a sphere that lies in $\mathbb{R}^n$ by extending the function $f: V \to \mathbb{S}^{r-1}$ to the function $\hat{f}: V \to \mathbb{S}^{n-1}$ where, for all $i,j \in [n]$,
$$\hat{f}(i)_j = \begin{cases} f(i)_j, & 1 \leq j \leq r\\ 0, & r+1 \leq j \leq n,\end{cases}$$
where $\hat{f}(i)_j$ denotes the $j$th entry of $\hat{f}(i)$. It is straightforward to see that angles are preserved, i.e., for all $u,v \in V$, $\hat{f}(u) \cdot \hat{f}(v) = f(u) \cdot f(v)$.

Let $x$ be an  matrix where the $i$th column is given by $\hat{f}(i)$ for all $i \in [n]$ and let $Y = x^Tx$. We can now calculate $z_P(Y)$:
$$z_P(Y) 
     = \sum_{(i,j) \in E} \frac{1-Y_{ij}}{2} \\
    = \sum_{(i,j) \in E}  \frac{1-\hat{f}(i) \cdot \hat{f}(j)}{2}$$
    $$
    = \sum_{(i,j) \in E} \frac{1-f(i) \cdot f(j)}{2} = \sum_{(i,j) \in E} \frac{1-(\xi_2/k)}{2} \\
    = \frac{|E|}{2}(1-\xi_2/k). \qedhere$$

\end{proof}

\begin{prop}
\label{thm:feasibleDualSRG}
Let $G$ be a $\text{SRG}(n,k,\lambda,\mu)$ where $n=4(3t+1), k=3(t+1), \lambda=2, \mu=t+1$ for some non-negative integer $t$. Then there exists a feasible $\zeta$ for the dual of the GW SDP so that $z_D(\zeta) = \frac{|E|}{2}(1-\xi_2/k)$.
\end{prop}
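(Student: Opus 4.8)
The plan is to exhibit an explicit dual-feasible point whose objective equals the primal value $\frac{|E|}{2}(1-\xi_2/k)$ obtained in Proposition \ref{thm:feasiblePrimalSRG}; by weak SDP duality this matching of values simultaneously certifies that the equal-angle embedding is an \emph{optimal} primal solution. First I would write the dual in a convenient form. Since $G$ is $k$-regular and the constraint $Y_{ii}=1$ is enforced, the GW objective $\frac12\sum_{(i,j)\in E}(1-Y_{ij})$ coincides with $\frac14\langle L, Y\rangle$, where $L=kI-A$ is the Laplacian and $A$ the adjacency matrix (this uses $L_{ii}=k$ and $\sum_i L_{ii}=nk=2|E|$, so the constant $\frac{|E|}{2}$ is absorbed). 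Hence the primal reads $\max\{\langle \tfrac14 L, Y\rangle : Y_{ii}=1\ \forall i,\ Y\succeq 0\}$, whose standard SDP dual is $\min\{\sum_{i} y_i : \operatorname{Diag}(y)-\tfrac14 L \succeq 0\}$, with dual slack $S=\operatorname{Diag}(y)-\tfrac14 L$.

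Next I would guess the correct dual point by exploiting regularity: take the constant vector $y_i = c := \frac{k-\xi_2}{4}$ for every $i$. Then the slack becomes $S = cI - \tfrac14(kI - A) = \tfrac14(A - \xi_2 I)$, which is positive semidefinite precisely because $\xi_2$ is the smallest eigenvalue of $A$; thus $\zeta$ is dual-feasible, with equality holding on the $\xi_2$-eigenspace. Its objective is $z_D(\zeta)=\sum_i y_i = \frac{n(k-\xi_2)}{4}$, and substituting $|E|=\frac{nk}{2}$ gives $\frac{n(k-\xi_2)}{4} = \frac{|E|}{2}\!\left(1-\frac{\xi_2}{k}\right)$, exactly as required.

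For the stated family I would finally pin down $\xi_2$ explicitly. The non-principal adjacency eigenvalues of an $\mathrm{SRG}(n,k,\lambda,\mu)$ are $\tfrac12\big[(\lambda-\mu)\pm\sqrt{(\lambda-\mu)^2+4(k-\mu)}\big]$; substituting $\lambda=2$, $\mu=t+1$, $k=3(t+1)$ makes the discriminant $(1-t)^2+8(t+1)=(t+3)^2$ a perfect square, so the two eigenvalues are $r=2$ and $s=-(t+1)$. Since $-(t+1)<0<2<3(t+1)$ for all $t\ge 0$, the smallest eigenvalue is $\xi_2=s=-(t+1)$, and the explicit PSD certificate is $S=\tfrac14\big(A+(t+1)I\big)\succeq 0$.

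I do not anticipate a deep obstacle: the essential content is the observation that a single scalar (constant) dual variable already attains the bound, so the only real care-points are (i) deriving the dual with the correct $\tfrac14$ scaling and confirming that the constant $\frac{|E|}{2}$ is cleanly absorbed by passing to the Laplacian form, and (ii) verifying that $-(t+1)$, rather than $r$ or $k$, is genuinely the minimum eigenvalue so that $A-\xi_2 I\succeq 0$. Indeed, the feasibility argument is insensitive to the specific parameters and works for any $k$-regular graph; the role of the special family $(n,k,\lambda,\mu)=(4(3t+1),3(t+1),2,t+1)$ is only to deliver the clean value $\xi_2/k=-\tfrac13$, which yields the $\tfrac{2}{3}|E|$ SDP value underlying the $0.912$ ratio established in the subsequent theorem.
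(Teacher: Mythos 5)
Your proposal is correct and takes essentially the same route as the paper: the paper also picks the constant dual solution (in its normalization, $\zeta_i = -\xi_2$ for all $i$, which under the change of variables $y_i = (\zeta_i + k)/4$ is exactly your $y_i = (k-\xi_2)/4$), certifies feasibility because the slack $A - \xi_2 I$ is PSD by minimality of $\xi_2$, and matches the objective to $\frac{|E|}{2}(1-\xi_2/k)$ via $|E| = nk/2$. The only cosmetic differences are that the paper quotes the dual form directly from Karloff rather than deriving it from the Laplacian, and it defers the explicit computation $\xi_2 = -k/3 = -(t+1)$ to a separate proposition, since (as you also note) the dual argument itself never uses the special parameters.
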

\begin{proof}
    Let $G$ be a $\text{SRG}(n,k,\lambda,\mu)$. Let $\xi_2$ be the smallest eigenvalue of the adjacency matrix $A$ of $G$. The dual of the GW SDP relaxation for unit-weight graphs is given as follows: find $\zeta = (\zeta_1,\dots, \zeta_n) \in \mathbb{R}^n$ to minimize
$$z_D(\zeta) = \frac{|E|}{2}+\frac{1}{4}\sum_{i=1}^n \zeta_i,$$
subject to
$A + \text{diag}(\zeta_1,\dots,\zeta_n)$
being positive semidefinite, where $\text{diag}(\zeta_1,\dots,\zeta_n)$ is a diagonal matrix whose $i$th entry on the diagonal is $\zeta_i$ \cite{K99}.
    
    Let $\zeta = (-\xi_2, \dots, -\xi_2)$. Then,
    $A+\text{diag}(\zeta_1,\dots,\zeta_n)= A-\xi_2 I.$
    It is a standard result of linear algebra that for any square matrix $M$ and constant $c$, if $\lambda$ is an eigenvalue of $M$, then $\lambda+c$ is an eigenvalue of $M+cI$; thus letting $\lambda_\text{min}(M)$ denote the smallest eigenvalue of a matrix $M$, we have,
    $$\lambda_\text{min}(A+\text{diag}(\zeta_1,\dots,\zeta_n)) = \lambda_\text{min}(A - \xi_2 I) = \xi_2 - \xi_2 = 0.$$

    Since $\lambda_\text{min}(A+\text{diag}(\zeta_1,\dots,\zeta_n)) \geq 0$, then $A+\text{diag}(\zeta_1,\dots,\zeta_n)$ is positive-semidefinite, meaning that $\zeta$ is a feasible solution.

    We now calculate $z_D(\zeta)$:
    \begin{align*}
        z_D(\zeta) 
        &= \frac{|E|}{2} + \frac{1}{4}\sum_{i=1}^n \zeta_i \\
        &= \frac{|E|}{2} + \frac{1}{4}\sum_{i=1}^n -\xi_2 = \frac{|E|}{2} - \frac{\xi_2 n}{4} \\
        &= \frac{|E|}{2}\left(1 - \frac{2}{|E|}\frac{\xi_2 n}{4}\right)\\
        &= \frac{|E|}{2}\left(1 - \frac{2}{nk/2}\frac{\xi_2 n}{4}\right) \tag{$|E| = nk/2$}\\
        &= \frac{|E|}{2}\left(1 -\frac{\xi_2}{k}\right), \text{     as desired.} \qedhere
    \end{align*} 
\end{proof}

\begin{theorem}
\label{thm:optimality}
     Let $G$ be a $\text{SRG}(n,k,\lambda,\mu)$ and let $\xi_2$ be the smallest eigenvalue of the adjacency matrix of $G$. Then $z_P^* = \frac{|E|}{2}(1-\xi_2/k)$ (and similarly, $z_D^* = \frac{|E|}{2}(1-\xi_2/k)$). Moreover, the mapping $f$ obtained by Corollary \ref{thm:srgToSphere} corresponds to an optimal GW SDP solution $Y$.
\end{theorem}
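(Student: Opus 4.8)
The plan is to establish optimality via the standard weak-duality sandwich: exhibit a primal feasible solution and a dual feasible solution whose objective values coincide, and conclude that each must be optimal. Write $V := \frac{|E|}{2}(1-\xi_2/k)$. Proposition \ref{thm:feasiblePrimalSRG} already supplies a primal feasible $Y = x^Tx$, built from the embedding $\hat{f}$ of the equal-angle mapping $f$ of Theorem \ref{thm:srgToSphere}, with $z_P(Y) = V$. It therefore remains only to produce a dual feasible point attaining the same value $V$ and then to close the gap.

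For the dual side I would reuse the construction of Proposition \ref{thm:feasibleDualSRG}, namely $\zeta = (-\xi_2,\dots,-\xi_2)$. Although that proposition is stated for the special family $n=4(3t+1),\,k=3(t+1),\,\lambda=2,\,\mu=t+1$, its proof invokes only that $G$ is $k$-regular (so that $|E| = nk/2$) and that $\xi_2$ is the smallest eigenvalue of the adjacency matrix $A$ (so that $A-\xi_2 I \succeq 0$). Both facts hold for an arbitrary $\text{SRG}(n,k,\lambda,\mu)$, so the identical computation yields a dual feasible $\zeta$ with $z_D(\zeta) = V$ for every strongly-regular graph.

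With both feasible points in hand, I would apply weak duality for the GW SDP, i.e. $z_P(Y') \le z_D(\zeta')$ for every primal feasible $Y'$ and dual feasible $\zeta'$, which upon optimizing each side gives $z_P^* \le z_D^*$. Chaining the four inequalities
\[
V = z_P(Y) \le z_P^* \le z_D^* \le z_D(\zeta) = V
\]
forces equality throughout, so $z_P^* = z_D^* = V$. In particular $z_P(Y) = z_P^*$, so the primal solution $Y$ — and hence the equal-angle mapping $f$ from which it was constructed — is an optimal GW SDP solution, as claimed. A pleasant feature of this argument is that it never requires strong duality or a constraint qualification such as Slater's condition; matching feasible objective values alone certify optimality.

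The only genuine subtlety is the scope mismatch between this general statement and the family-specific Proposition \ref{thm:feasibleDualSRG}; the resolution is simply to observe that its dual-feasibility computation never touches the particular parameters $(n,k,\lambda,\mu)$ and goes through verbatim for any $k$-regular graph. Everything else is routine weak-duality bookkeeping, since the substantive geometric content — placing the vertices on a sphere at a common angle and lifting that configuration into $\mathbb{R}^n$ — has already been carried out in Theorem \ref{thm:srgToSphere} and Proposition \ref{thm:feasiblePrimalSRG}.
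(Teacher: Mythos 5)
Your proposal is correct and follows essentially the same route as the paper: a weak-duality sandwich combining the primal feasible point of Proposition \ref{thm:feasiblePrimalSRG} with the dual feasible point of Proposition \ref{thm:feasibleDualSRG}, both attaining $\frac{|E|}{2}(1-\xi_2/k)$. If anything, your write-up is tidier than the paper's: you explicitly patch the scope mismatch (Proposition \ref{thm:feasibleDualSRG} is stated only for the family $n=4(3t+1)$, $k=3(t+1)$, $\lambda=2$, $\mu=t+1$, but its proof uses only $k$-regularity and the minimality of $\xi_2$, so it extends verbatim to all strongly-regular graphs), and you correctly observe that weak duality alone closes the gap, whereas the paper invokes strong duality via Slater's condition in a footnote before conceding in its final sentence that weak duality suffices.
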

\begin{proof}
    Let $G$ be a $\text{SRG}(n,k,\lambda,\mu)$ and  let $Y$ be the feasible GW SDP solution corresponding to the mapping $f$ obtained in Corollary \ref{thm:angleForQ3t}. From Propositions \ref{thm:feasiblePrimalSRG} and \ref{thm:feasibleDualSRG}, there exists feasible $\zeta$ such that $z_P(Y) = \frac{|E|}{2}(1-\xi_2/k) = z_D(\zeta)$. Thus,
    $$\frac{|E|}{2}(1-\xi_2/k) = z_P(Y) \leq z_P^* \overset{(a)}{\leq} z_D^* = z_D(\zeta) =\frac{|E|}{2}(1-\xi_2/k),$$
    where (a) follows from strong duality\footnote{Here, strong duality (i.e. $z_P^* = z_D^*$) holds due to  Slater's conditions \cite{boyd2004convex}.}. Therefore, $z_P^* =\frac{|E|}{2}(1-\xi_2/k)= z_P(Y)$ and $z_D^* = \frac{|E|}{2}(1-\xi_2/k) = z_D(\zeta)$. Note that in the inequalities above, $z_P^* \leq z_D^*$ follows by weak duality. \qedhere
\end{proof}

From Theorem \ref{thm:optimality}, we see that the optimal SDP value is dependent on $\xi_2$, the smallest eigenvalue of the adjacency matrix of the graph. It is well-known that the eigenvalues of a strongly-regular graph are entirely determined by its parameters ($n,k,\lambda,\mu$); an explicit listing of these eigenvalues can be seen in Theorem \ref{thm:srgEigenvalues} below.

\begin{theorem}[Seidel \ \cite{seidel1979strongly}]
\label{thm:srgEigenvalues}
    Let $G$ be an $\text{SRG}(n,k,\lambda,\mu)$. Then there are up to three distinct eigenvalues: $k, \xi_1, \xi_2$, with $\xi_1 \geq \xi_2 \geq 1$ and
    
    $$\{\xi_1, \xi_2\} = \frac{1}{2}\left[(\lambda-\mu) \pm \sqrt{(\lambda-\mu)^2 + 4(k-\mu)} \right];$$
    the multiplicities of these eigenvalues are a function of $n,k,\lambda, \mu$.
    

\end{theorem}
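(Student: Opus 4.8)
The plan is to extract everything from the single matrix identity that simultaneously encodes all three structural parameters. Write $A$ for the adjacency matrix, $I$ for the identity, $J$ for the all-ones matrix, and $\mathbf{1}$ for the all-ones vector. The first step is to verify the fundamental equation
$$A^2 = kI + \lambda A + \mu(J - I - A).$$
This is proved entrywise: the $(i,j)$ entry of $A^2$ counts the common neighbors of $i$ and $j$ (equivalently, walks of length two). On the diagonal this count is the degree $k$; for adjacent $i,j$ it equals $\lambda$; and for distinct non-adjacent $i,j$ it equals $\mu$. Checking that the right-hand side reproduces exactly these three cases (using that $A$ and $J-I-A$ are the adjacency matrices of $G$ and its complement, both with zero diagonal) establishes the identity. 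This is routine and I would dispatch it quickly.

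The second step extracts the eigenvalues. Since $G$ is $k$-regular, $A\mathbf{1} = k\mathbf{1}$, so $k$ is an eigenvalue with eigenvector $\mathbf{1}$. Because $A$ is symmetric, any eigenvector $v$ for an eigenvalue $\xi \neq k$ is orthogonal to $\mathbf{1}$, whence $Jv = \mathbf{1}(\mathbf{1}^{T} v) = 0$. Applying the fundamental equation to such a $v$ and using $Av = \xi v$ together with $Jv = 0$ gives
$$\xi^{2} v = \big(k + \lambda\xi - \mu - \mu\xi\big)\,v,$$
so every eigenvalue other than $k$ satisfies the quadratic $\xi^{2} - (\lambda-\mu)\xi - (k-\mu) = 0$. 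Solving yields precisely
$$\xi = \frac{1}{2}\left[(\lambda-\mu) \pm \sqrt{(\lambda-\mu)^{2} + 4(k-\mu)}\,\right],$$
which are the two values $\xi_{1} \geq \xi_{2}$ in the statement; the discriminant is nonnegative because $\mu \leq k$. Hence $A$ has at most the three eigenvalues $k,\xi_{1},\xi_{2}$.

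The final step pins down the multiplicities. Let $m_{1},m_{2}$ denote the multiplicities of $\xi_{1},\xi_{2}$. Since a primitive SRG is connected, $k$ is a simple eigenvalue, giving $1 + m_{1} + m_{2} = n$. Since $A$ has zero diagonal, its trace vanishes, giving $k + m_{1}\xi_{1} + m_{2}\xi_{2} = 0$. This linear system in $(m_{1},m_{2})$ has a unique solution whenever $\xi_{1} \neq \xi_{2}$, and because $\xi_{1},\xi_{2}$ are themselves explicit functions of $n,k,\lambda,\mu$, the resulting multiplicities are functions of the parameters as well, which completes the argument.

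As for the main obstacle: there is no deep difficulty, since the argument is essentially an eigenvalue computation forced by one identity. The only points requiring care are the bookkeeping around degeneracies: that $k$ is a simple eigenvalue (which uses connectedness of $G$), that both $\xi_{1}$ and $\xi_{2}$ genuinely occur as eigenvalues rather than merely as candidate roots (which follows once the computed multiplicities are seen to be positive, or from the observation that $A$ is not an affine combination of $I$ and $J$ since $G$ is neither complete nor empty), and that the ``up to three'' phrasing correctly covers the borderline case in which the two roots would coincide. I would verify these edge cases but expect none of them to present a genuine obstruction.
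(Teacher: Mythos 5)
Your proof is correct, but there is nothing in the paper to compare it against: Theorem~\ref{thm:srgEigenvalues} is stated as an imported result, attributed to Seidel \cite{seidel1979strongly}, and the paper gives no proof of it. Your argument is the standard one for this fact: the entrywise identity $A^2 = kI + \lambda A + \mu(J-I-A)$, restriction to the orthogonal complement of $\mathbf{1}$ (where $J$ annihilates), the resulting quadratic $\xi^2-(\lambda-\mu)\xi-(k-\mu)=0$, and the rank/trace system for the multiplicities. The edge cases you flag are the right ones; note in particular that simplicity of the eigenvalue $k$ genuinely requires connectedness, which the paper's Definition~\ref{def:srg} does not impose (a disjoint union of complete graphs $K_{k+1}$ is an $\mathrm{SRG}$ with $\mu=0$ under that definition), so your multiplicity step only applies verbatim to primitive SRGs --- though the ``up to three distinct eigenvalues'' claim survives in the imprimitive case as well. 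One further point worth recording: your formula shows $\xi_2 = \frac{1}{2}\bigl[(\lambda-\mu)-\sqrt{(\lambda-\mu)^2+4(k-\mu)}\bigr] \le 0$ whenever $\mu \le k$, which contradicts the condition ``$\xi_1 \ge \xi_2 \ge 1$'' as printed in the theorem statement; that is evidently a typo in the paper (it should read $\xi_1 \ge \xi_2$, with $\xi_2$ in fact negative), and the paper itself relies on $\xi_2$ being negative later, e.g.\ $\xi_2 = -k/3$ in Proposition~\ref{thm:eigenForQ3t}.
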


In the next section, we use the above results to obtain concrete approximation ratios on strongly-regular graphs with a specific choice of parameters.

\subsection{SRGs with Specific Parameters}
We now consider the family of strongly-regular graphs whose parameters are given by $n=4(3t+1), k=3(t+1), \lambda=2, \mu=t+1$ for some integer $t$.

This choice of parameters for this family of strongly-regular graphs was inspired by particular instances in the MQLib library \cite{DGS18} and ultimately found due to the connection that strongly-regular graphs have with partial geometries; more precisely, it can be shown that the family of strongly-regular graphs we consider are so-called pseudo-geometric graphs for generalized quadrangles of order $(3,t)$ for some positive integer $t$ \cite{haemers2001pseudo}. This family of graphs is finite, with a total of 167 non-isomorphic graphs \cite{haemers2001pseudo}; see Table \ref{tab:q3tInstances} for a table of all possible instances in this family along with their parameters. Additionally, this family of strongly-regular graphs has several instances under 100 nodes, making them suitable for benchmarking purposes on current and near-term quantum devices.

\begin{table}
    \centering
    
    \begin{tabular}{|c||c|c|c|c||c|c|}
        \hline $t$ & $n$ & $k$ & $\lambda$ & $\nu$ & \# of instances  & $\alpha_\text{QAOA}$ \\ \hline 
        1 & 16 & 6 & 2 & 2 & 2 & 0.8935\\ 
        3 & 40 & 12 & 2 & 4 & 28 & 0.8605\\ 
        5 & 64 & 18 & 2 & 6 & 167 & 0.8433\\ 
        9 & 112 & 30 & 2 & 10 & 1 & 0.8246 \\\hline
    \end{tabular}
    \caption{\label{tab:q3tInstances} \footnotesize All possible strongly-regular graphs of the form $n=4(3t+1), k=3(t+1), \lambda=2, \mu=t+1$ for some integer $t$ along with the number of non-isomorphic graphs found within each parameter set (here, $k$ is the degree of the vertices). All graphs satisfy $\mc(G) = \frac{2}{3}|E|$ with the exception of 13 instances (see Table \ref{tab:stronglyRegularExceptions} for the approximations achieved on these instances). Excluding these exceptions, the last column denotes the approximation ratio achieved by depth-1 QAOA (see Section \ref{sec:qaoaPerformanceInterestingInstances}).}
\end{table}

These parameter choices induces a family of graphs whose 2nd smallest eigenvalue is given by the simple expression $(\xi_2 = -k/3$); by Corollary \ref{thm:angleForQ3t}, this implies that, for this family of graphs, there exists a spherical mapping $f$ of the vertices such that the angles between adjacent vertices are equal (with angle $\arccos(-1/3) \approx 109.47^\circ$). We remark that this angle is identical to the angles formed by the vertices of a regular tetrahedron with its center.\footnote{See \cite{brittin1945valence} for a simple, yet succinct proof that such angles in a tetrahedron have a measure of $109.47^\circ$.}

\begin{prop}
\label{thm:eigenForQ3t}
    Let $G$ be a $\text{SRG}(n,k,\lambda,\mu)$ where $n=4(3t+1), k=3(t+1), \lambda=2, \mu=t+1$ for some non-negative integer $t$. Let $\xi_2$ be the smallest eigenvalue of the adjacency matrix of $G$. Then $\xi_2 = -\frac{k}{3}$.
\end{prop}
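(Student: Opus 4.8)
The plan is to directly invoke the eigenvalue formula of Theorem \ref{thm:srgEigenvalues} and substitute the given parameters. Since $G$ is an $\text{SRG}(n,k,\lambda,\mu)$, the smallest eigenvalue $\xi_2$ is the one obtained with the minus sign in
$$\xi_2 = \frac{1}{2}\left[(\lambda-\mu) - \sqrt{(\lambda-\mu)^2 + 4(k-\mu)} \right],$$
so the entire argument reduces to evaluating this expression at $\lambda = 2$, $\mu = t+1$, and $k = 3(t+1)$.

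First I would compute the two quantities that enter the formula: $\lambda - \mu = 2 - (t+1) = 1 - t$ and $k - \mu = 3(t+1) - (t+1) = 2(t+1)$. The key step, and really the only thing one must notice, is that the discriminant collapses to a perfect square:
$$(\lambda-\mu)^2 + 4(k-\mu) = (1-t)^2 + 8(t+1) = t^2 + 6t + 9 = (t+3)^2.$$
Because $t \geq 0$ we have $t + 3 > 0$, so its square root is simply $t+3$.

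Substituting back, I obtain
$$\xi_2 = \frac{1}{2}\left[(1-t) - (t+3)\right] = \frac{1}{2}(-2t - 2) = -(t+1).$$
Finally, since $k = 3(t+1)$, we have $-(t+1) = -k/3$, which gives $\xi_2 = -k/3$ as claimed. There is no genuine obstacle here: the proof is a one-line substitution into Theorem \ref{thm:srgEigenvalues}, and the only mild point of care is verifying the factorization of the discriminant as $(t+3)^2$ and confirming that the minus-sign branch (rather than the plus-sign branch, which yields $\xi_1 = 2$) is the smallest eigenvalue.
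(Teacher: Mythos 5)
Your proof is correct and follows essentially the same route as the paper: a direct substitution of the parameters into the eigenvalue formula of Theorem \ref{thm:srgEigenvalues} followed by simplification (the paper works in terms of $k$, writing $\mu = k/3$, while you work in terms of $t$, but the algebra is identical in substance). Your version is actually more explicit than the paper's, since you show the discriminant collapses to the perfect square $(t+3)^2$, a step the paper leaves as ``some algebraic manipulation.''
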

\begin{proof}
    Observe that for such a $G$, the SRG parameter $\mu$ is equal to $\mu = t+1 = \frac{1}{3}\cdot 3(t+1) = \frac{k}{3}$.

    From Proposition \ref{thm:srgEigenvalues}, substitution of the above parameter values, and some algebraic manipulation, we have that
    $$\xi_2 = \frac{1}{2}\left[(\lambda-\mu) - \sqrt{(\lambda-\mu)^2 + 4(k-\mu)} \right] $$$$=  \frac{1}{2}\left[(2-k/3) - \sqrt{(2-k/3)^2 + 4(k-k/3)} \right] = -\frac{k}{3}. \qedhere$$ 

\end{proof}

\begin{corollary}
\label{thm:angleForQ3t}
    Let $G=(V,E)$ be a $\text{SRG}(n,k,\lambda,\mu)$ where $n=4(3t+1), k=3(t+1), \lambda=2, \mu=t+1$ for some non-negative integer $t$. Then there exists an integer $r\leq n$ and a function $f: V \to \mathbb{S}^{r-1}$ so that for all $(u,v) \in E$, $f(u) \cdot f(v) = -1/3$. Moreover, the mapping $f$ obtained by Corollary \ref{thm:srgToSphere} corresponds to an optimal GW SDP solution $Y$.
\end{corollary}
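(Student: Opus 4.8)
The plan is to assemble the corollary from three results already in hand, since nothing genuinely new needs to be proved; the statement is a direct specialization of the general SRG machinery to this parameter family. First I would read off the smallest adjacency eigenvalue from Proposition \ref{thm:eigenForQ3t}, which gives $\xi_2 = -k/3$ for exactly the family $n=4(3t+1),\,k=3(t+1),\,\lambda=2,\,\mu=t+1$ in question.

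Next I would invoke Seidel's spherical embedding, Theorem \ref{thm:srgToSphere}, to produce the map $f$. The one hypothesis that must be checked is primitivity, since Theorem \ref{thm:srgToSphere} is stated only for primitive SRGs. I would verify this directly from the parameters: a strongly-regular graph fails to be connected only when $\mu = 0$, and its complement fails to be connected only when $\mu = k$; here $\mu = t+1 \geq 1$ and $k = 3(t+1) = 3\mu$, so $\mu \neq 0$ and $\mu \neq k$, making both $G$ and its complement connected. (The degenerate value $t=0$ yields $K_4$, which is excluded by the definition of an SRG, so we take $t \geq 1$.) With primitivity in place, Theorem \ref{thm:srgToSphere} supplies an integer $r \leq n$ and a map $f : V \to \mathbb{S}^{r-1}$ satisfying $f(u) \cdot f(v) = \xi_2/k$ on every edge $(u,v)$. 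Substituting $\xi_2 = -k/3$ collapses this to $f(u) \cdot f(v) = -1/3$, which is the first assertion; equivalently the common angle on edges is $\arccos(-1/3)$.

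Finally, the optimality claim is exactly Theorem \ref{thm:optimality} applied to this $f$: lifting $f$ into $\mathbb{R}^n$ yields a feasible primal point with objective $\frac{|E|}{2}(1-\xi_2/k)$ by Proposition \ref{thm:feasiblePrimalSRG}, while Proposition \ref{thm:feasibleDualSRG} produces a dual-feasible $\zeta$ with the same value; weak duality then pins both quantities at the optimum. So I would simply cite Theorem \ref{thm:optimality} to conclude that $f$ corresponds to an optimal GW SDP solution $Y$. The only step requiring genuine attention is the primitivity verification needed to license Seidel's theorem; everything else is substitution into, and citation of, earlier results, with the real content having already been spent in Proposition \ref{thm:eigenForQ3t} and in the primal/dual feasibility Propositions \ref{thm:feasiblePrimalSRG} and \ref{thm:feasibleDualSRG}.
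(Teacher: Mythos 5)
Your proposal is correct and follows essentially the same route as the paper: both deduce $\xi_2=-k/3$ from Proposition \ref{thm:eigenForQ3t}, verify primitivity via $\mu\neq 0$ and $\mu\neq k$ to license Theorem \ref{thm:srgToSphere}, and then cite Theorem \ref{thm:optimality} for optimality. Your only additions---spelling out why $\mu=0$ or $\mu=k$ is exactly the non-primitive case, and noting the vacuous $t=0$ instance---are harmless refinements of the paper's argument (which, incidentally, miscites itself as ``Proposition \ref{thm:angleForQ3t}'' where it means Proposition \ref{thm:eigenForQ3t}).
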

\begin{proof}
     By Proposition \ref{thm:angleForQ3t}, the smallest eigenvalue of the adjacency matrix of such a $G$ is given by $\xi_2 = -k/3$. Note that any non-primitive strongly-regular graph must either satisfy $\mu = 0$ or $\mu = k$; since this is not the case for $G$, then $G$ must be a primitive strongly-regular graph. Thus, by Theorem \ref{thm:srgToSphere}, there exists an integer $r\leq n$ and a function $f: V \to \mathbb{S}^{r-1}$ so that for all $(u,v) \in E$, $f(u) \cdot f(v) = \frac{\xi_2}{k} = \frac{-k/3}{k} = -\frac{1}{3}$, and by Theorem \ref{thm:optimality}, this mapping $f$ corresponds to an optimal SDP solution.
\end{proof}

\begin{corollary}
\label{thm:q3tOptimality}
     Let $G=(V,E)$ be a $\text{SRG}(n,k,\lambda,\mu)$ where $n=4(3t+1), k=3(t+1), \lambda=2, \mu=t+1$ for some non-negative integer $t$. Then $z_P^* = \frac{2}{3}|E|$ (and similarly, $z_D^* = \frac{2}{3}|E|$). Moreover, the mapping $f$ obtained by Corollary \ref{thm:angleForQ3t} corresponds to an optimal GW SDP solution $Y$.
\end{corollary}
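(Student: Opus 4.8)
The plan is to obtain this corollary as a direct specialization of Theorem \ref{thm:optimality}, using the eigenvalue computed in Proposition \ref{thm:eigenForQ3t}; no new machinery is required, since the heavy lifting (exhibiting matching feasible primal and dual solutions) has already been carried out.

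First I would observe that $G$ is an $\text{SRG}(n,k,\lambda,\mu)$, so Theorem \ref{thm:optimality} applies verbatim and yields
$$z_P^* = z_D^* = \frac{|E|}{2}\left(1 - \frac{\xi_2}{k}\right),$$
where $\xi_2$ is the smallest eigenvalue of the adjacency matrix, together with the assertion that the equal-angular spherical mapping corresponds to an optimal primal solution.

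Next I would substitute the eigenvalue specific to this parameter family. By Proposition \ref{thm:eigenForQ3t} we have $\xi_2 = -k/3$, hence $\xi_2/k = -1/3$, and therefore
$$z_P^* = \frac{|E|}{2}\left(1 + \frac{1}{3}\right) = \frac{|E|}{2}\cdot\frac{4}{3} = \frac{2}{3}|E|,$$
with the identical computation giving $z_D^* = \frac{2}{3}|E|$.

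Finally, the ``moreover'' claim is inherited directly: Theorem \ref{thm:optimality} already guarantees that the spherical mapping (identified for this family by Corollary \ref{thm:angleForQ3t}) is an optimal GW SDP solution. Since the whole argument is a substitution into a previously established formula, I do not expect any genuine obstacle; the only thing to check is that $G$ meets the hypotheses of both Theorem \ref{thm:optimality} and Proposition \ref{thm:eigenForQ3t}, which it does by assumption.
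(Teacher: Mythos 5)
Your proposal is correct and follows exactly the paper's own argument: the paper likewise invokes Theorem \ref{thm:optimality} to get $z_P^* = z_D^* = \frac{|E|}{2}\left(1-\xi_2/k\right)$, substitutes $\xi_2 = -k/3$ from Proposition \ref{thm:eigenForQ3t} to obtain $\frac{2}{3}|E|$, and inherits the optimality of the mapping $f$ from that same theorem. Incidentally, your write-up cites the dependencies more cleanly than the paper, whose proof of this corollary mistakenly cites itself (Theorem \ref{thm:q3tOptimality}) where Theorem \ref{thm:optimality} is meant.
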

\begin{proof}
    Recall that for such SRG, that $\xi_2 = -k/3$ (Proposition \ref{thm:eigenForQ3t}). Then the result follows from Theorem \ref{thm:q3tOptimality} as $z_P^* = z_D^* = \frac{|E|}{2}(1-\xi_2/k) = \frac{|E|}{2}(1 - (-k/3)/k) = \frac{2}{3}|E|$.
\end{proof}

Next, we discuss the Max-Cut value that is achieved for this family of strongly regular graphs. Since the Max-Cut value is upper-bounded by the optimal SDP value, then by Corollary \ref{thm:q3tOptimality}, we have that $\mc(G) \leq \frac{2}{3}|E|$ for all graphs $G$ in this family.
A straightforward computer verification shows that for all but 13 instances in this (finite) family of strongly-regular graphs, the maximum cut contains exactly $\frac{2}{3}|E|$ edges. A list of exceptions can be found in Table \ref{tab:stronglyRegularExceptions}; we remark that all the exceptions have 40 nodes (corresponding to $t=3$) and that the maximum cut value is still close to (but not equal to) two-thirds of the number of edges. The verification code can be found online in the CI-QuBe library \cite{CI-QuBe2021}. We show that this family of graphs has approximation ratio of at least $0.912$ (with respect to the GW) algorithm, with the approximation ratio being exact in the case where $\mc(G) = \frac{2}{3}|E|$.

\begin{table}
    \centering

    \begin{tabular}{|c|c|c|c|c|c|c|}
    \hline
        ID \#'s & $n$ & $|E|$ & $\text{MC}(G)$ & $\frac{\text{MC}(G)}{|E|}$ & $\alpha_{G, \text{GW}}$ & $\alpha_\text{QAOA}$  \\ \hline
11 -- 15 & 40 & 240 & 156 & 0.6500 & 0.9357 & $0.8826$  \\ \hline
16 -- 23 & 40 & 240 & 158 & 0.6583 & 0.9238 & $0.8714$  \\ \hline
    \end{tabular}
\caption{\label{tab:stronglyRegularExceptions} \footnotesize The 13 instances $G = (V,E)$ of strongly-regular graphs parameterized by $n=4(3t+1), k = 3(t+1), \lambda = 2, \mu = t+1$ for some $t$, where $\frac{\text{Max-Cut}(G)}{|E|} \neq \frac{2}{3}$. The last two columns are the instance-specific approximation ratio that the GW and QAOA algorithms achieve on each of these instances.}
    
\end{table}

\begin{theorem}
\label{thm:q3tAR}
    Let $G$ be a $\text{SRG}(n,k,\lambda,\mu)$ where $n=4(3t+1), k=3(t+1), \lambda=2, \mu=t+1$ for some non-negative integer $t$. The GW algorithm achieves an instance-specific approximation ratio of at least $0.912$ on these instances. Moreover, if $\text{Max-Cut}(G) = \frac{2}{3}|E|$, then the instance-specific approximation ratio is 0.912.
\end{theorem}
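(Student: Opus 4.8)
The plan is to combine the exact value of the optimal SDP objective with the expected performance of hyperplane rounding, both of which have already been established for this family. By Corollary \ref{thm:angleForQ3t}, every edge in the optimal GW SDP solution subtends an angle $\theta = \arccos(-1/3)$, and by Corollary \ref{thm:q3tOptimality} the optimal SDP value is $z_P^* = \frac{2}{3}|E|$. So the first step is to invoke Proposition \ref{thm:equalAngles} with this specific $\theta$: the expected number of edges cut by hyperplane rounding is exactly
$$\text{HP}(x) = \frac{\theta}{\pi}|E| = \frac{\arccos(-1/3)}{\pi}|E|.$$

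Next I would compute the instance-specific approximation ratio $\alpha_{\text{GW},G} = \text{HP}(x)/\text{Max-Cut}(G)$. The denominator is the only place where the two cases diverge. In the generic case $\text{Max-Cut}(G) = \frac{2}{3}|E|$, so the ratio is exactly
$$\alpha_{\text{GW},G} = \frac{(\arccos(-1/3)/\pi)\,|E|}{\frac{2}{3}|E|} = \frac{3\arccos(-1/3)}{2\pi},$$
and a numerical evaluation gives $\approx 0.912$, establishing the "moreover" clause. For the lower bound that holds for \emph{all} instances in the family (including the 13 exceptions), I would use the fact that $\text{Max-Cut}(G) \le z_P^* = \frac{2}{3}|E|$, which was noted immediately after Corollary \ref{thm:q3tOptimality}. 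Since the numerator is fixed and the denominator can only be smaller, the ratio can only increase; hence $\alpha_{\text{GW},G} \ge \frac{3\arccos(-1/3)}{2\pi} \approx 0.912$ for every graph in the family.

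Strictly, I should double-check the direction of the exceptional cases to confirm consistency: for the 13 exceptions, $\text{Max-Cut}(G) < \frac{2}{3}|E|$, so the approximation ratio is strictly larger than $0.912$, matching the values (e.g. $0.9238$, $0.9357$) reported in Table \ref{tab:stronglyRegularExceptions}. This is a useful sanity check rather than part of the formal argument.

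I do not anticipate a genuine obstacle here, since all the heavy lifting — the equal-angle structure, the optimality of the spherical embedding via the primal-dual argument, and the eigenvalue computation $\xi_2 = -k/3$ — has already been done in the preceding propositions and corollaries. The only mild subtlety is keeping the logic of the two cases clean: the universal lower bound of $0.912$ relies on the \emph{upper} bound $\text{Max-Cut}(G) \le \frac{2}{3}|E|$ (pushing the ratio up), whereas the exact equality $0.912$ requires the \emph{matching} value $\text{Max-Cut}(G) = \frac{2}{3}|E|$. Once these are separated, the proof reduces to substituting $\theta = \arccos(-1/3)$ and evaluating $\frac{3}{2\pi}\arccos(-1/3)$ numerically.
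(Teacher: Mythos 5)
Your proposal is correct and follows essentially the same route as the paper's own proof: both invoke Corollary \ref{thm:angleForQ3t} for the equal angle $\theta = \arccos(-1/3)$, Proposition \ref{thm:equalAngles} (via the optimality of the embedding from Theorem \ref{thm:optimality}) for the expected cut $\frac{\theta}{\pi}|E|$, and the bound $\text{Max-Cut}(G) \leq z_P^* = \frac{2}{3}|E|$ from Corollary \ref{thm:q3tOptimality} to get the universal lower bound, with equality in the case $\text{Max-Cut}(G) = \frac{2}{3}|E|$. Your separation of the two cases and the sanity check against Table \ref{tab:stronglyRegularExceptions} are consistent with, and slightly more explicit than, the paper's argument.
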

\begin{proof}
    Let $G=(V,E)$ be such a graph. By Corollary \ref{thm:q3tOptimality}, we have that $\mc(G) \leq z_p^* = \frac{2}{3}|E|$. By Corollary \ref{thm:angleForQ3t}, there exists an $r \leq n$ and $f:V \to \mathbb{S}^{r-1}$ so that $\arccos(f(u) \cdot f(v)) = \arccos(-1/3)$ for all $(u,v)\in E$; let $\theta = \arccos(-1/3)$. By Theorem \ref{thm:optimality}, this mapping corresponds to an optimal solution $Y$ to the GW SDP relaxation. Since $Y$ is optimal with respect to the GW SDP, then by Proposition \ref{thm:equalAngles}, the instance-specific GW approximation ratio for $G$ is given by:
    $$\frac{\frac{\theta}{\pi}|E|}{\text{\mc}(G)} \geq \frac{\frac{\theta}{\pi}|E|}{\frac{2}{3}|E|} = \frac{3}{2}\cdot \frac{\theta}{\pi} =  \frac{3}{2}\cdot \frac{\arccos(-1/3)}{\pi}\approx 0.912;$$
    when $\mc(G) = \frac{2}{3}|E|$, then 
    we get that the instance-specific approximation ratio is $\approx 0.912$.
\end{proof}

For the instances for which $\text{Max-Cut} = \frac{2}{3}|E|$ holds, a non-computer proof of this equality would be more satisfying; however, we were unable to find such a proof as of the writing of this work. We believe that such a non-computer proof and the other proofs in this sub-section have the potential to be generalized to produce theorems similar to Theorem \ref{thm:q3tAR} below for other families of strongly-regular graphs.

\section{QAOA's Performance on Challenging Instances for the GW Algorithm}
\label{sec:qaoaPerformanceInterestingInstances}

The GW algorithm does not provide optimal cuts for the instances found using Karloff's construction; one may hope that quantum computing may have some advantage over these instances which are, in a sense, classically difficult. We show in Theorem \ref{thm:karloffQAOAApproxRatio}, that depth-1 QAOA achieves a lower instance-specific approximation ratio for the instances that arise from Karloff's construction; this implies that, for these instances, either a higher circuit depth or some modification of the QAOA algorithm would be needed in order to demonstrate some form of quantum advantage (over the GW algorithm).

To first make the sequence of graphs considered more precise, consider the family of Karloff instances $G_m = J(m,m/2,b)$ where $m$ is even and $b = \lceil \frac{1}{4}(\cos(\theta^*)+1)m\rceil$  and $\theta^* = \text{argmin}_{0\leq \theta \leq \pi} \frac{2}{\pi}\frac{\theta}{1-\cos \theta} \approx 2.33112$; recall from Section \ref{sec:GW_Alg} that this choice of $b$ is (near) optimal in the sense that, for fixed $m$, the corresponding graph will have the worst possible approximation ratio with respect to the GW algorithm. Some calculations give us that $b \approx \lceil 0.077m\rceil$. We will assume that $m\geq 12$, and hence one can show that $0 \leq b < \frac{m}{6}$. It follows from Karloff \cite{K99} that the expected cut value of $G_m$ approaches $\alpha^* = 0.878$ as $m\to \infty$.

We first prove two helpful lemmas. In the first lemma, we prove that the degree of $G_m$ is ${m/2 \choose b}^2$ and in the next lemma, we prove that for $m$ large enough, the graphs $G_m$ are triangle free.
\begin{lemma}\label{thm:karloffDegree}
The degree of each vertex in $G_m$ is  ${m/2 \choose b}^2$.
\end{lemma}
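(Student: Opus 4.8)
The plan is to count the neighbors of a single fixed vertex directly. Recall that in $G_m = J(m, m/2, b)$ the vertices are the $(m/2)$-element subsets of $[m]$, and two distinct subsets $S$ and $T$ are adjacent precisely when $|S \cap T| = b$. Since the construction is symmetric under relabeling the ground set $[m]$ (indeed, $G_m$ is vertex-transitive), every vertex has the same degree; it therefore suffices to count the neighbors of one arbitrary subset $S$.

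Fix such an $S$ with $|S| = m/2$, and note that its complement $\bar{S} = [m] \setminus S$ also has size $m/2$. To build a neighbor $T$, I would describe independently how $T$ meets $S$ and how it meets $\bar{S}$: a subset $T$ satisfies $|T| = m/2$ and $|S \cap T| = b$ if and only if $T$ is the union of some $b$-element subset of $S$ with some $(m/2 - b)$-element subset of $\bar{S}$. The number of choices for the part inside $S$ is $\binom{m/2}{b}$, and the number of choices for the part inside $\bar{S}$ is $\binom{m/2}{m/2 - b}$; these choices are independent, so the total is their product.

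To finish, I would apply the binomial symmetry identity $\binom{m/2}{m/2 - b} = \binom{m/2}{b}$, yielding a degree of $\binom{m/2}{b}^2$. The only point needing a moment's care is confirming that no such $T$ coincides with $S$ itself, which must be excluded since adjacency is defined only between distinct vertices; but under our standing assumptions ($m \geq 12$ and $0 \le b < m/6$, hence $b < m/2$), any counted $T$ has $|S \cap T| = b < m/2 = |S|$, so $T \neq S$ and no correction is required. There is no real obstacle here — the statement is a routine double-counting argument — so the only things to remain vigilant about are this distinctness condition and the use of the binomial symmetry.
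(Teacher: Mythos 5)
Your proof is correct and follows essentially the same counting argument as the paper: choose the $b$-element overlap inside $S$, choose the remaining $m/2-b$ elements from $[m]\setminus S$, and apply the symmetry $\binom{m/2}{m/2-b}=\binom{m/2}{b}$. Your added check that $T \neq S$ (since $b < m/2$) is a small point of rigor the paper's proof omits, but it does not change the argument.
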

\begin{proof}
Let us now determine the degree of $G_m$. Let $S$ be a vertex of $G_m$. To construct a neighbor $T$, we must pick $b$ elements from the $m/2$ elements in $S$ for $T$ for the overlap, then from the remaining $m/2$ elements in $[m] \setminus S$, we must pick $m/2 - b$ elements to determine the remaining elements in $T$. Thus, the neighbor of ways to construct a neighbor for $S$ are,
$\text{deg}(S) = {m/2 \choose b}{m/2 \choose m/2 - b} = {m/2 \choose b}^2.$
\end{proof}

\begin{lemma}\label{thm:karloffTriangleFree}
    For all $m\geq 12$, $G_m$ is triangle-free.
\end{lemma}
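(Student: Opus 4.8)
The plan is to argue by contradiction using the inclusion--exclusion principle on the three vertices of a hypothetical triangle. The only property of $b$ I will need is the bound $0 \le b < \frac{m}{6}$, which (as noted in the setup preceding these lemmas) is guaranteed by the assumption $m \ge 12$ together with $b = \lceil \tfrac14(\cos\theta^* + 1)m\rceil \approx \lceil 0.077m\rceil$.

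First I would suppose, toward a contradiction, that $G_m$ contains a triangle: three distinct vertices $S, T, U$, each an $(m/2)$-element subset of $[m]$, that are pairwise adjacent. By the definition of $J(m,m/2,b)$, pairwise adjacency means exactly
\[ |S \cap T| = |S \cap U| = |T \cap U| = b. \]

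Next I would apply inclusion--exclusion to the union $S \cup T \cup U$. Since $|S| = |T| = |U| = m/2$ and each pairwise intersection equals $b$, we get
\[ |S \cup T \cup U| = \tfrac{3m}{2} - 3b + |S \cap T \cap U|. \]
Rearranging for the triple intersection and using the trivial bound $|S \cup T \cup U| \le m$ (all three subsets live inside $[m]$) yields
\[ |S \cap T \cap U| = |S \cup T \cup U| - \tfrac{3m}{2} + 3b \le 3b - \tfrac{m}{2}. \]

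Finally, since $|S \cap T \cap U| \ge 0$, the displayed inequality forces $3b - \tfrac{m}{2} \ge 0$, i.e.\ $b \ge \tfrac{m}{6}$, contradicting $b < \tfrac{m}{6}$. Hence no such triangle can exist. The argument is elementary and essentially all of its content is packed into the inequality $b < m/6$; the one thing worth double-checking is precisely this bound, namely that $b = \lceil \tfrac14(\cos\theta^* + 1)m\rceil$ satisfies $b < m/6$ once $m \ge 12$. Since the ceiling adds at most $1$ and $0.077m + 1 < m/6$ holds for $m$ at least about $11$, the threshold $m \ge 12$ suffices, and I do not expect any genuine obstacle beyond this routine verification.
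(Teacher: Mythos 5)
Your proof is correct, and it takes a genuinely different (and more elementary) route than the paper's. The paper fixes an edge $(S,T)$ and explicitly enumerates candidate common neighbors $U$, partitioning by $k = |U \cap S \cap T|$ to obtain the count $\sum_{k=0}^{b} \binom{b}{k}\binom{m/2-b}{b-k}^2\binom{b}{m/2-2b+k}$, and then observes that every term vanishes because $m/2 - 2b + k \geq m/2 - 2b > b$ when $b < m/6$: the complement of $S \cup T$ has only $b$ elements, yet $U$ would need more than $b$ elements there. Your inclusion--exclusion argument isolates exactly the same obstruction---three $(m/2)$-sets with pairwise intersections of size $b$ cannot fit inside $[m]$ unless $b \geq m/6$---without setting up the enumeration, so it is shorter and makes the threshold $m/6$ completely transparent (the triple intersection would have to have negative cardinality). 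What the paper's approach buys in exchange is an exact formula for the number of common neighbors of an adjacent pair, which is structurally more informative and could in principle be reused in parameter regimes $b \geq m/6$ where triangles may exist; your argument only certifies nonexistence. Both proofs lean on the same externally supplied bound $0 \leq b < m/6$ for $m \geq 12$, which the paper likewise just imports from the setup preceding the lemmas, so your closing verification that $\lceil 0.077m + 1\rceil$-type quantities stay below $m/6$ once $m \geq 12$ is a harmless (and welcome) addition rather than a gap.
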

\begin{proof}
As $m\geq 12$, we have that $0 \leq b < m/6$ as determined previously. Now, for a given edge $(S,T)$ in $G_m$, let us count how many ways we can construct a mutual neighbor $U$ of both $S$ and $T$. Let $k$ be the number of elements in $U$ that are also contained in $S \cap T$. Then, by combinatorially counting, the number of ways to construct a mutual neighbor $U$ of both $S$ and $T$ are: 
$$\sum_{k=0}^b {b \choose k}{m/2 - b \choose b-k}^2{b \choose m/2-2b+k}.$$ 

We claim that the above sum is zero. To see this, consider the term ${b \choose m/2-2b+k}$ in the sum above. Observe that, for $k\geq 0$ and since $0\leq b < m/6$, we have
$m/2 - 2b+k \geq m/2-2b > (6b)/2-2b = b$
and hence the ${b \choose m/2-2b+k}$ term is zero. \end{proof}



Finally, we show that as $m$ increases, the approximation ratio that depth-1 QAOA achieves on $G_m$ approaches 0.592.

\begin{theorem}
\label{thm:karloffQAOAApproxRatio}
Let $F_p(G)$ denote the expected cut value obtained from depth-$p$ (standard) QAOA when the optimal choice of variational parameters $(\gamma,\beta)$ are used  over the family of Karloff instances $\{G_m\}$. Then $$\lim_{m \to \infty} \frac{F_1(G_m)}{\text{Max-Cut}(G)} = 0.592.$$
\end{theorem}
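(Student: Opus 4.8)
The plan is to compute $F_1(G_m)$ and $\text{Max-Cut}(G_m)$ separately and then take their ratio in the limit. The structural facts I would rely on are that, for $m\geq 12$, the graph $G_m$ is $d$-regular and triangle-free (Lemmas \ref{thm:karloffDegree} and \ref{thm:karloffTriangleFree}) with $d=\binom{m/2}{b}^2$. For the denominator, I would first pin down $\text{Max-Cut}(G_m)$. Karloff's optimal SDP solution assigns the common angle $\theta=\arccos(4b/m-1)$ to every edge, so Proposition \ref{thm:equalAngles} gives $\mathrm{HP}(x)=\tfrac{\theta}{\pi}|E|$; combining this with the explicit ratio $\alpha_{\text{GW},G}=\tfrac{\theta/\pi}{1-2b/m}$ of Theorem \ref{Thm-Conj} and the definition $\alpha_{\text{GW},G}=\mathrm{HP}(x)/\text{Max-Cut}(G_m)$ immediately yields $\text{Max-Cut}(G_m)=|E|(1-2b/m)$, i.e.\ the SDP relaxation is tight on these instances.

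For the numerator, I would invoke the standard closed form for the depth-1 QAOA Max-Cut expectation on a triangle-free $d$-regular graph: each edge contributes $\tfrac{1}{2}+\tfrac{1}{2}\sin(4\beta)\sin\gamma\cos^{d-1}\gamma$ in expectation, the triangle-free hypothesis being exactly what removes all common-neighbor corrections, so that
\begin{equation*}
F_1(\gamma,\beta) = |E|\left[\frac{1}{2} + \frac{1}{2}\sin(4\beta)\sin\gamma\cos^{d-1}\gamma\right].
\end{equation*}
Optimizing is then elementary: taking $\sin(4\beta)=1$ and solving $\tfrac{d}{d\gamma}\bigl(\sin\gamma\cos^{d-1}\gamma\bigr)=0$ gives $\tan\gamma=1/\sqrt{d-1}$ and hence a maximum value of $\tfrac{(d-1)^{(d-1)/2}}{d^{d/2}}$, so that $F_1(G_m)=\tfrac{|E|}{2}\left[1+\tfrac{(d-1)^{(d-1)/2}}{d^{d/2}}\right]$.

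Finally I would let $m\to\infty$. Since $d=\binom{m/2}{b}^2\to\infty$, we have $\tfrac{(d-1)^{(d-1)/2}}{d^{d/2}}=\tfrac{1}{\sqrt{d}}(1-1/d)^{(d-1)/2}\to 0$, so $F_1(G_m)\to\tfrac{|E|}{2}$; meanwhile $b=\lceil\tfrac{1}{4}(\cos\theta^*+1)m\rceil$ gives $2b/m\to\tfrac{1}{2}(\cos\theta^*+1)$ and hence $\text{Max-Cut}(G_m)\to|E|\cdot\tfrac{1-\cos\theta^*}{2}$. Dividing, the $|E|$ cancels and
\begin{equation*}
\lim_{m\to\infty}\frac{F_1(G_m)}{\text{Max-Cut}(G_m)} = \frac{1/2}{(1-\cos\theta^*)/2} = \frac{1}{1-\cos\theta^*} \approx 0.592.
\end{equation*}
The main obstacle is establishing and correctly specializing the depth-1 QAOA formula: one must justify that the triangle-free, $d$-regular structure collapses the general per-edge expectation to the single-term expression above (so that no surviving common-neighbor phases remain), and verify that the interior critical point is the global optimizer of $\sin\gamma\cos^{d-1}\gamma$. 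By contrast, the Max-Cut identification and the asymptotics are routine, the only mild care being that the ceiling in the definition of $b$ perturbs $b/m$ by $O(1/m)$ and hence does not affect the limit.
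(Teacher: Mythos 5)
Your proposal is correct and follows essentially the same route as the paper's proof: triangle-freeness and regularity (Lemmas \ref{thm:karloffDegree} and \ref{thm:karloffTriangleFree}) yield the closed-form depth-1 QAOA value $\frac{|E|}{2}\bigl(1+\frac{1}{\sqrt{d}}\bigl(\frac{d-1}{d}\bigr)^{(d-1)/2}\bigr)$ with $d=\binom{m/2}{b}^2$, which is divided by the Max-Cut value $|E|(1-2b/m)$ and sent to the limit $\frac{1}{1-\cos\theta^*}\approx 0.592$. The only cosmetic differences are that you re-derive the optimized QAOA expectation from the per-edge formula (Theorem \ref{thm:probOfCuttingEdgeFormula}) rather than citing it in already-optimized form from \cite{WHJR18}, and you back-solve the Max-Cut value from Theorem \ref{Thm-Conj} together with Proposition \ref{thm:equalAngles} instead of quoting Karloff's formula (Equation \ref{eqn:maxcut_GW}) directly.
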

\begin{proof}
By \cite{WHJR18}, since $G_m$ is triangle-free (for $m\geq 12$) due to Lemma \ref{thm:karloffTriangleFree}, we have that depth-1 QAOA achieves the following expected cut value when the optimal parameters
are used:
$$F^*(G_m) = \frac{|E|}{2}\left(1+\frac{1}{\sqrt{d}}\left(\frac{d-1}{d}\right)^{(d-1)/2}\right);$$

here, $d$ is the degree of the (regular) graph $G_m$, which is $d={m/2 \choose b}^2$ (Lemma \ref{thm:karloffDegree}). From Karloff's work \cite{K99}, since $0 \leq b < m/6 < m/4$, we have that the maximum cut is given by
\begin{equation}\label{eqn:maxcut_GW}\text{Max-Cut}(G_m) = \frac{n}{2}{m/2 \choose b}^2\left(1-\frac{2b}{m}\right).\end{equation}
Thus, the approximation ratio for depth-1 standard QAOA on $G_m$ (with $m\geq 12$) goes to, \begin{align*}
   &\lim_{m \to \infty} F^*(G_m)/\text{Max-Cut}(G_m) = \lim_{m \to \infty} \frac{\frac{|E|}{2}\left(1+\frac{1}{\sqrt{d}}\left(\frac{d-1}{d}\right)^{(d-1)/2}\right)}{\frac{n}{2}{m/2 \choose b}^2\left(1-\frac{2b}{m}\right)} \\
    &= \lim_{m \to \infty} \frac{\frac{nd/2}{2}\left(1+\frac{1}{\sqrt{d}}\left(\frac{d-1}{d}\right)^{(d-1)/2}\right)}{\frac{n}{2}{m/2 \choose b}^2\left(1-\frac{2b}{m}\right)} \tag{as $G_m$ is $d$-regular}\\
    &= \lim_{m \to \infty} \frac{\frac{nd/2}{2}}{\frac{n}{2}{m/2 \choose b}^2\left(1-\frac{2b}{m}\right)} \tag{ $\left(\frac{d-1}{d}\right)^{(d-1)/2} \to e^{-1/2}$ as $d \to \infty$}\\
    &= \lim_{m \to \infty} \frac{\frac{nd/2}{2}}{\frac{n}{2}d\left(1-\frac{2b}{m}\right)} \tag{as $d = {m/2 \choose b}^2$}\\
    &=  \frac{1}{1-\cos(\theta^*)} \tag{as $\cos(\theta^*) = \frac{4b}{m}-1$ as $m \to \infty$}\\
    &= \frac{\pi}{2}\frac{\alpha^*}{\theta^*} \approx 0.592 \tag{as $\alpha^* = \frac{2}{\pi}\frac{\theta^*}{1-\cos \theta^*}$}.
\end{align*}
In the above calculations we use that as $m\to \infty$, then $b \to \infty$ and we have that (for $m$ large enough)\footnote{One can show that $b$ is strictly positive when $m \geq 12$. If we instead take $b = \lfloor 0.077 m \rfloor$, then $b$ is strictly positive whenever $m \geq 14$.}, $d = {m/2 \choose b} \geq {m/2 \choose 1} = m/2 \to \infty$. The calculations also hold true (including the triangle-free result) if we consider $b = \lfloor 0.077 m \rfloor$ instead of $b = \lceil 0.077 m \rceil$ in the construction of $G_m$.
\end{proof} 

The above theorem considers Karloff instances where GW obtains a close to 0.878 approximation (i.e., $b/m \approx 0.077$). In \ref{sec:other_karloff_instances}, we discuss the performance gap of GW and QAOA on other instances with varying $b/m$; essentially the gap goes to zero as $b/m \rightarrow 1/4$, where the instances become easy for both algorithms. 

In the case of the family of strongly-regular graphs considered in Section \ref{sec:stronglyRegularApprox} (parameterized by $n=4(3t+1), k=3(t+1), \lambda=2, \mu=t+1$ for some non-negative integer $t$), the proof used in Theorem \ref{thm:karloffQAOAApproxRatio} can not be used as the proof depends on the graphs being triangle-free and the family of strongly-regular graphs that we consider are not triangle-free (as $\lambda = 2 \neq 0$). Instead, we compute the approximation ratio of QAOA attained at depth $p=1$, using the following theorem:

\begin{theorem}[\cite{WHJR18}]
\label{thm:probOfCuttingEdgeFormula}
For QAOA with $p=1$ layers and variational parameters $\gamma$ and $\beta$, the probability of cutting an edge $(u,v)$ is
$\frac{1}{2} + \frac{1}{4}(\sin 4\beta \sin \gamma)(\cos^{d_u} \gamma + \cos^{d_v} \gamma) - \frac{1}{4}(\sin^2 2\beta \cos^{d_u +d_v - 2\lambda_{uv}} \gamma)(1 - \cos^{\lambda_{uv}}2\gamma),$
where $d_u+1$ and $d_v+1$ are the degrees of vertices $u$ and $v$ respectively, and $\lambda_{uv}$ is the number of common neighbors of $u$ and $v$.
\end{theorem}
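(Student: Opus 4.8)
The plan is to prove this by a direct computation of the depth-1 QAOA edge expectation in the Heisenberg picture, where the locality of depth-1 QAOA automatically restricts everything to the neighborhood of the edge $(u,v)$. Write the QAOA state as $|\gamma,\beta\rangle = e^{-i\beta B}e^{-i\gamma C}|+\rangle^{\otimes n}$ with mixer $B = \sum_j X_j$ and cost Hamiltonian $C = \tfrac12\sum_{(i,j)\in E}(1 - Z_iZ_j)$. The probability that $(u,v)$ is cut equals $\tfrac12\big(1 - \langle \gamma,\beta|Z_uZ_v|\gamma,\beta\rangle\big)$, so it suffices to evaluate $\langle Z_uZ_v\rangle$. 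First I would push $Z_uZ_v$ through the mixer: since $e^{i\beta X_w}Z_w e^{-i\beta X_w} = \cos 2\beta\, Z_w \pm \sin 2\beta\, Y_w$ and the two qubits are distinct, this expands $U_B^\dagger Z_uZ_v U_B$ into four terms $ZZ$, $ZY$, $YZ$, $YY$ with coefficients $\cos^2 2\beta$, $\tfrac12\sin 4\beta$, $\tfrac12\sin 4\beta$, $\sin^2 2\beta$ respectively.

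Next I would conjugate each term by $U_C = \prod_{(i,j)}e^{i\gamma Z_iZ_j/2}$ (up to global phase), using that $Z_w$ commutes with every $Z_iZ_j$, while $e^{i\gamma Z_uZ_j/2}Y_u e^{-i\gamma Z_uZ_j/2} = \cos\gamma\, Y_u + \sin\gamma\, X_uZ_j$ for each edge $(u,j)$. The final step is to take the expectation in $|+\rangle^{\otimes n}$, where only strings of $X$'s and $I$'s survive (with value $1$) because $\langle+|Y|+\rangle = \langle+|Z|+\rangle = 0$. The $ZZ$ term contributes nothing, since $Z_uZ_v$ is untouched and $\langle+|Z_uZ_v|+\rangle = 0$. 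For the $ZY$ (resp. $YZ$) term, the prefactor $Z_u$ is cancelled only by the $Z_u$ produced when the edge $(u,v)$ takes its $\sin\gamma$ branch while conjugating $Y_v$, and every other edge at $v$ must take the $\cos\gamma$ branch for its dangling $Z_j$ to vanish; this yields $\sin\gamma\,\cos^{d_v}\gamma$, where $d_v = \deg(v)-1$ counts the edges at $v$ other than $(u,v)$. Together these two terms produce the $\tfrac14\sin 4\beta\,\sin\gamma(\cos^{d_u}\gamma + \cos^{d_v}\gamma)$ contribution.

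The crux, and the step I expect to be the main obstacle, is the $YY$ term. Here the edge $(u,v)$ is inert because $Y_uY_v$ commutes with $Z_uZ_v$, so $Y_u$ rotates only over the $d_u$ edges at $u$ other than $(u,v)$ and $Y_v$ over the $d_v$ edges at $v$ other than $(u,v)$. For survival, $Y_u$ must become $X_u$ and $Y_v$ must become $X_v$, which forces an odd number of $\sin\gamma$-branches on each side, and every $Z_w$ factor produced must be cancelled. I would then partition the neighbors by type: each neighbor of $u$ (resp. $v$) that is \emph{not} shared must take the $\cos\gamma$ branch, since its $Z_w$ has no partner, contributing $\cos^{d_u - \lambda_{uv}}\gamma$ and $\cos^{d_v-\lambda_{uv}}\gamma$; and for each of the $\lambda_{uv}$ common neighbors $w$, the factors $Z_w$ from edges $(u,w)$ and $(v,w)$ cancel only if both or neither take the $\sin\gamma$ branch. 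Summing over the number $\ell$ of common neighbors taking the ``both-$\sin\gamma$'' choice, subject to $\ell$ odd (the parity needed on each side), gives
\[
\sum_{\ell \text{ odd}} \binom{\lambda_{uv}}{\ell}(\sin^2\gamma)^{\ell}(\cos^2\gamma)^{\lambda_{uv}-\ell} = \frac{1 - \cos^{\lambda_{uv}} 2\gamma}{2},
\]
by the standard odd-part binomial identity together with $\cos^2\gamma - \sin^2\gamma = \cos 2\gamma$. Multiplying by $\cos^{d_u+d_v-2\lambda_{uv}}\gamma$ and the mixer coefficient $\sin^2 2\beta$ reproduces the last term of the formula.

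Finally I would assemble the three surviving contributions into $\langle Z_uZ_v\rangle$ and substitute into $\tfrac12(1 - \langle Z_uZ_v\rangle)$. The only remaining subtlety is in the \emph{signs}, which depend on the chosen conventions for the mixer ($e^{\pm i\beta B}$) and the sign of $\gamma$; I would pin these down at the end by checking trivial special cases (e.g.\ $\beta=\gamma=0$ must give $\tfrac12$, and a triangle-free edge, $\lambda_{uv}=0$, must reduce to the known $\tfrac12 + \tfrac14\sin 4\beta\,\sin\gamma(\cos^{d_u}\gamma + \cos^{d_v}\gamma)$), which is enough to select the correct global signs and recover the stated expression.
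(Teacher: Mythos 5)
The paper does not prove this statement at all: it is imported verbatim from the cited reference [WHJR18], so there is no internal proof to compare against. Your derivation is, in substance, the standard one (and essentially the one in that reference): pass to the Heisenberg picture, expand $U_B^\dagger Z_uZ_vU_B$ into the $ZZ$, $ZY$, $YZ$, $YY$ Pauli sectors with coefficients $\cos^2 2\beta$, $\tfrac12\sin 4\beta$, $\tfrac12\sin 4\beta$, $\sin^2 2\beta$, conjugate by the cost unitary edge by edge, and kill everything that is not a product of $X$'s and identities in the $\ket{+}^{\otimes n}$ expectation. Your accounting is right in all the places that matter: the $ZZ$ sector dies, the cross sectors force the $(u,v)$ edge onto its $\sin\gamma$ branch and all other edges at the $Y$-vertex onto $\cos\gamma$ branches (giving $\sin\gamma\,\cos^{d_u}\gamma$ and $\sin\gamma\,\cos^{d_v}\gamma$), and in the $YY$ sector the edge $(u,v)$ is inert, non-shared neighbors are frozen to $\cos\gamma$, and shared neighbors must pair their $\sin\gamma$ branches with odd total parity, which is exactly the odd-binomial sum $\sum_{\ell\ \mathrm{odd}}\binom{\lambda_{uv}}{\ell}\sin^{2\ell}\gamma\,\cos^{2(\lambda_{uv}-\ell)}\gamma = \tfrac12\bigl(1-\cos^{\lambda_{uv}}2\gamma\bigr)$. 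The one soft spot is your sign-fixing strategy: the check $\beta=\gamma=0$ pins nothing (every non-constant term vanishes identically there), and the triangle-free check cannot determine the sign of the $\sin^2 2\beta$ term, since that term vanishes when $\lambda_{uv}=0$. That sign is not a convention artifact — it is invariant under $\gamma\to-\gamma$ and $\beta\to-\beta$ — so it must come out of the computation itself, by tracking the factors of $i$: each both-$\sin$ shared neighbor contributes $(i\sin\gamma)^2=-\sin^2\gamma$, and for odd $\ell$ the residual $Y_uZ_u^{\ell}\,Y_vZ_v^{\ell}=(iX_u)(iX_v)=-X_uX_v$, so the two minus signs cancel and the $YY$ sector contributes positively to $\langle Z_uZ_v\rangle$, hence negatively to the cut probability, as in the stated formula. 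With that step done explicitly rather than deferred to special cases, your argument is a complete and correct proof of the theorem the paper only cites.
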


For strongly-regular graphs with parameters $(n,k,\lambda, \mu)$, the expected cut value can then be calculated by setting $d_u = d_v = k - 1$ and $\lambda_{uv} = \lambda$ for all $u,v \in E$ in the formula above and multiplying by the total number of edges $|E| = nk/2$. For each possible $t \in \{1,3,5,9\}$ that parametrizes the strongly-regular instances in this work, we optimize the $\gamma$ and $\beta$ quantum circuit parameters of Theorem \ref{thm:probOfCuttingEdgeFormula} by doing a fine $5000 \times 5000$ grid search over the domain $\gamma \in [-\pi/2, \pi/2)$ and $\beta \in [-\pi/4, \pi/4)$; globally optimal circuit parameters are guaranteed to exist in this domain for any unit-weight regular graph \cite{ZWCPL20}.

We note that $p=1$ QAOA achieves an approximation ratio in $[0.82, 0.90]$ over all the strongly-regular graphs considered in this work, see Tables \ref{tab:q3tInstances} and \ref{tab:stronglyRegularExceptions}. These theorems indicate that a higher circuit depth and/or a modification to QAOA is needed in order to surpass the performance of the GW algorithm.

\section{Discussion}
\label{sec:discussion}

Identifying challenging problem instances across a range of sizes is critical for the effective benchmarking and comparison of optimization algorithms. This need is particularly pronounced in the NISQ (Noisy Intermediate-Scale Quantum) era of quantum computing, where hardware limitations constrain the size of solvable instances. In such settings, many small-scale instances tend to be trivially solvable by classical methods, often achieving near-optimal approximation ratios in minimal time. This hampers the ability to meaningfully differentiate algorithmic performance.

In this work, we focused on the performance of depth-1 QAOA and GW for solving Max-Cut over certain strongly regular graph families. In preliminary computational exploration, we found that Karloff instances with modified edge weights also result in an approximation ratio $<0.95$ for the Goemans-Williamson algorithm. Additionally, we evaluated a diverse set of instances from the MQLib library \cite{DGS18} using their suite of 37 implemented heuristics. We observed that the most challenging graphs have both positive and negative edge weights spanning several orders of magnitude (see Figure \ref{fig:edgeWeights} in the appendix for an example graph from \href{https://github.com/MQLib/MQLib}{MQLib}). We include these instances into \href{link}{CI-QuBe} github repository to enable future work, see  \ref{sec:mixedWeightInstances} for details. We hope that this work will help both classical and quantum optimization communities in furthering our understanding on the strengths and weaknesses of classical and quantum algorithms.

\section{Acknowledgements}
The research presented in this article was supported by the Laboratory Directed Research and Development program of Los Alamos National Laboratory under project number 20230049DR as well as by the NNSA's Advanced Simulation and Computing Beyond Moore's Law Program at Los Alamos National Laboratory. This work has been assigned LANL technical report number LA-UR-25-28776. The authors also acknowledge support from  DARPA
under Contract No. HR001120C0046, when both authors were at Georgia Institute of Technology.




\bibliographystyle{elsarticle-num} 
\bibliography{references}

\def\authornoop#1{}
\begin{thebibliography}{10}
\expandafter\ifx\csname url\endcsname\relax
  \def\url#1{\texttt{#1}}\fi
\expandafter\ifx\csname urlprefix\endcsname\relax\def\urlprefix{URL }\fi
\expandafter\ifx\csname href\endcsname\relax
  \def\href#1#2{#2} \def\path#1{#1}\fi

\bibitem{Google2019}
F.~Arute, K.~Arya, R.~Babbush, D.~Bacon, J.~C. Bardin, R.~Barends, R.~Biswas, S.~Boixo, F.~G. S.~L. Brandao, D.~A. Buell, B.~Burkett, Y.~Chen, Z.~Chen, B.~Chiaro, R.~Collins, W.~Courtney, A.~Dunsworth, E.~Farhi, B.~Foxen, A.~Fowler, C.~Gidney, M.~Giustina, R.~Graff, K.~Guerin, S.~Habegger, M.~P. Harrigan, M.~J. Hartmann, A.~Ho, M.~Hoffmann, T.~Huang, T.~S. Humble, S.~V. Isakov, E.~Jeffrey, Z.~Jiang, D.~Kafri, K.~Kechedzhi, J.~Kelly, P.~V. Klimov, S.~Knysh, A.~Korotkov, F.~Kostritsa, D.~Landhuis, M.~Lindmark, E.~Lucero, D.~Lyakh, S.~Mandrà, J.~R. McClean, M.~McEwen, A.~Megrant, X.~Mi, K.~Michielsen, M.~Mohseni, J.~Mutus, O.~Naaman, M.~Neeley, C.~Neill, M.~Y. Niu, E.~Ostby, A.~Petukhov, J.~C. Platt, C.~Quintana, E.~G. Rieffel, P.~Roushan, N.~C. Rubin, D.~Sank, K.~J. Satzinger, V.~Smelyanskiy, K.~J. Sung, M.~D. Trevithick, A.~Vainsencher, B.~Villalonga, T.~White, Z.~J. Yao, P.~Yeh, A.~Zalcman, H.~Neven, J.~M. Martinis, Quantum supremacy using a programmable superconducting processor, Nature 574 (2019) 505--510.

\bibitem{S94}
P.~W. Shor, Algorithms for quantum computation: discrete logarithms and factoring, in: Proceedings 35th Annual Symposium on Foundations of Computer Science, 1994, pp. 124--134.

\bibitem{HTOLHS21}
R.~Herrman, L.~Treffert, J.~Ostrowski, P.~C. Lotshaw, T.~S. Humble, G.~Siopsis, Impact of graph structures for qaoa on maxcut, Quantum Information Processing 20~(9) (2021) 1--21.

\bibitem{ZWCPL20}
L.~Zhou, S.-T. Wang, S.~Choi, H.~Pichler, M.~D. Lukin, Quantum approximate optimization algorithm: Performance, mechanism, and implementation on near-term devices, Physical Review X 10~(2) (2020) 021067.

\bibitem{bayerstadler2021industry}
A.~Bayerstadler, G.~Becquin, J.~Binder, T.~Botter, H.~Ehm, T.~Ehmer, M.~Erdmann, N.~Gaus, P.~Harbach, M.~Hess, et~al., Industry quantum computing applications, EPJ Quantum Technology 8~(1) (2021) 25.

\bibitem{campbell2022qaoa}
C.~Campbell, E.~Dahl, Qaoa of the highest order, in: 2022 IEEE 19th International Conference on Software Architecture Companion (ICSA-C), IEEE, 2022, pp. 141--146.

\bibitem{FH16}
E.~Farhi, A.~W. Harrow, Quantum supremacy through the quantum approximate optimization algorithm, arXiv preprint arXiv:1602.07674 (2016).

\bibitem{shaydulin2021classical}
R.~Shaydulin, S.~Hadfield, T.~Hogg, I.~Safro, Classical symmetries and the quantum approximate optimization algorithm, Quantum Information Processing 20 (2021) 1--28.

\bibitem{FGG14}
E.~Farhi, J.~Goldstone, S.~Gutmann, A quantum approximate optimization algorithm, arXiv preprint arXiv:1411.4028 (2014).

\bibitem{dunning2018}
I.~Dunning, S.~Gupta, J.~Silberholz, What works best when? {A} systematic evaluation of heuristics for {Max-Cut} and {QUBO}, INFORMS Journal on Computing 30~(3) (2018) 608--624.

\bibitem{GW95}
M.~X. Goemans, D.~P. Williamson, Improved approximation algorithms for maximum cut and satisfiability problems using semidefinite programming, Journal of the ACM (JACM) 42~(6) (1995) 1115--1145.

\bibitem{Trevisan12}
L.~Trevisan, Max cut and the smallest eigenvalue, SIAM Journal of Computing 41 (2012) 1769--1786.

\bibitem{Soto15}
J.~A. Soto, Improved analysis of a max-cut algorithm based on spectral partitioning, SIAM Journal on Discrete Mathematics 29 (2015) 259--–268.

\bibitem{lasserre2001global}
J.~B. Lasserre, Global optimization with polynomials and the problem of moments, SIAM Journal on optimization 11~(3) (2001) 796--817.

\bibitem{parrilo2000structured}
P.~A. Parrilo, Structured semidefinite programs and semialgebraic geometry methods in robustness and optimization, California Institute of Technology, 2000.

\bibitem{Khot02}
S.~Khot, On the power of unique 2-prover 1-round games, in: Proceedings of the thiry-fourth annual ACM symposium on Theory of computing, 2002, pp. 767--775.

\bibitem{MOO05}
E.~Mossel, R.~O'Donnell, K.~Oleszkiewicz, Noise stability of functions with low influences: invariance and optimality, in: 46th Annual IEEE Symposium on Foundations of Computer Science (FOCS'05), IEEE, 2005, pp. 21--30.

\bibitem{KKMO07}
S.~Khot, G.~Kindler, E.~Mossel, R.~O’Donnell, Optimal inapproximability results for {MAX-CUT} and other 2-variable {CSPs}?, SIAM Journal on Computing 37~(1) (2007) 319--357.

\bibitem{K99}
H.~Karloff, How good is the {Goemans--Williamson} {MAX-CUT} algorithm?, SIAM Journal on Computing 29~(1) (1999) 336--350.

\bibitem{BCIM18}
A.~E. Brouwer, S.~M. Cioabă, F.~Ihringer, M.~McGinnis, The smallest eigenvalues of hamming graphs, johnson graphs and other distance-regular graphs with classical parameters, Journal of Combinatorial Theory, Series B 133 (2018) 88--121.
\newblock \href {https://doi.org/https://doi.org/10.1016/j.jctb.2018.04.005} {\path{doi:https://doi.org/10.1016/j.jctb.2018.04.005}}.

\bibitem{janmark2014global}
J.~Janmark, D.~A. Meyer, T.~G. Wong, Global symmetry is unnecessary for fast quantum search, Physical Review Letters 112~(21) (2014) 210502.

\bibitem{TFHMG20}
R.~Tate, M.~Farhadi, C.~Herold, G.~Mohler, S.~Gupta, \href{https://doi.org/10.1145/3549554}{Bridging classical and quantum with {SDP} initialized warm-starts for {QAOA}}, ACM Transactions on Quantum Computing (jun 2022).
\newblock \href {https://doi.org/10.1145/3549554} {\path{doi:10.1145/3549554}}.
\newline\urlprefix\url{https://doi.org/10.1145/3549554}

\bibitem{TMGMG21}
R.~Tate, J.~Moondra, B.~Gard, G.~Mohler, S.~Gupta, Warm-started qaoa with custom mixers provably converges and computationally beats goemans-williamson's max-cut at low circuit depths, arXiv preprint arXiv:2112.11354 (2021).

\bibitem{egger2020warm}
D.~J. Egger, J.~Mare{\v{c}}ek, S.~Woerner, \href{https://doi.org/10.22331/q-2021-06-17-479}{Warm-starting quantum optimization}, {Quantum} 5 (2021) 479.
\newblock \href {https://doi.org/10.22331/q-2021-06-17-479} {\path{doi:10.22331/q-2021-06-17-479}}.
\newline\urlprefix\url{https://doi.org/10.22331/q-2021-06-17-479}

\bibitem{augustino2024strategies}
B.~Augustino, M.~Cain, E.~Farhi, S.~Gupta, S.~Gutmann, D.~Ranard, E.~Tang, K.~Van~Kirk, Strategies for running the qaoa at hundreds of qubits, arXiv preprint arXiv:2410.03015 (2024).

\bibitem{seidel1979strongly}
J.~Seidel, Strongly regular graphs, Surveys in combinatorics 38 (1979) 157--180.

\bibitem{bose1963strongly}
R.~C. Bose, Strongly regular graphs, partial geometries and partially balanced designs., Pacific Journal of Mathematics (1963) 389--419.

\bibitem{boyd2004convex}
S.~Boyd, S.~P. Boyd, L.~Vandenberghe, Convex optimization, Cambridge university press, 2004.

\bibitem{DGS18}
I.~Dunning, S.~Gupta, J.~Silberholz, What works best when? {A} systematic evaluation of heuristics for {Max-Cut} and {QUBO}, INFORMS Journal on Computing 30~(3) (2018).

\bibitem{haemers2001pseudo}
W.~H. Haemers, E.~Spence, The pseudo-geometric graphs for generalized quadrangles of order (3, t), European Journal of Combinatorics 22~(6) (2001) 839--845.

\bibitem{brittin1945valence}
W.~Brittin, Valence angle of the tetrahedral carbon atom, Journal of Chemical Education 22~(3) (1945) 145.

\bibitem{CI-QuBe2021}
R.~Tate, S.~Gupta, \href{https://github.com/swati1729/CI-QuBe}{{CI-QuBe}}, GitHub repository (2021).
\newline\urlprefix\url{https://github.com/swati1729/CI-QuBe}

\bibitem{WHJR18}
Z.~Wang, S.~Hadfield, Z.~Jiang, E.~G. Rieffel, Quantum approximate optimization algorithm for {MaxCut}: {A} fermionic view, Physical Review A 97~(2) (2018) 022304.

\bibitem{KMR17}
N.~Krislock, J.~Malick, F.~Roupin, Biqcrunch: A semidefinite branch-and-bound method for solving binary quadratic problems, {ACM} {Transactions} on {Mathematical} {Software} 43~(4) (Jan. 2017).

\end{thebibliography}

\appendix

\section{Other Karloff Instances}
\label{sec:other_karloff_instances}
\begin{figure}
    \centering
    \includegraphics[width=1.1\columnwidth]{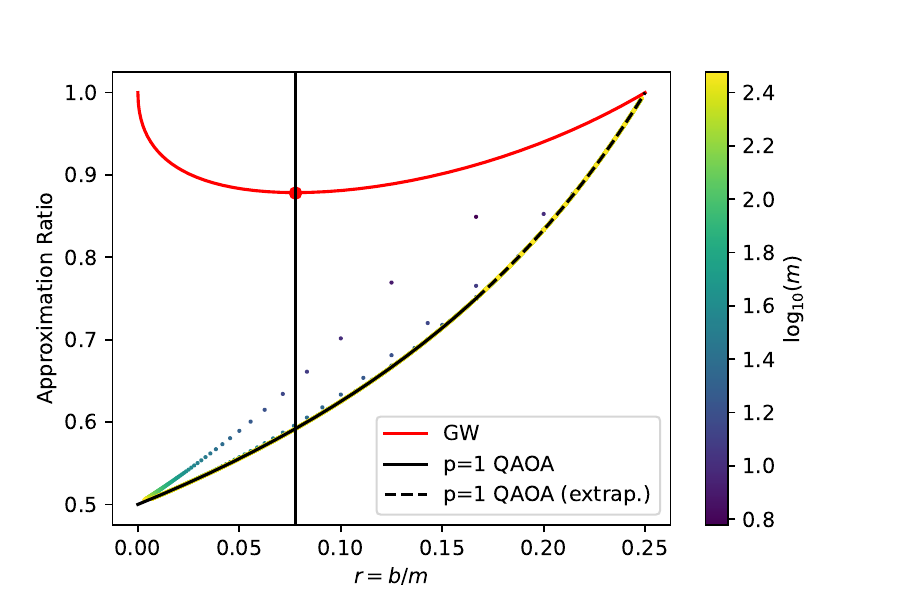}
    \caption{\footnotesize Approximation ratios for Karloff instances $J(m, m/2, b)$ for varying $r = b/m$. Red: the GW approximation ratio (which is only dependent on $r$ for any particular instance). Black: The $p=1$ QAOA limiting approximation ratio as $m \to \infty$; the solid line is given by a simple formula in $r$ for which we provably know the limiting approximation ratio, the dashed line is an extrapolation of this formula. Each dot corresponds to a specific Karloff instance, the color of which corresponds to the size of the graph (which increases with $m$). The red dot and vertical line correspond to the worst-case approximation ratio of GW (0.878) which occurs at $r\approx 0.0777$. }
    \label{fig:karloff_approx_ratios}
\end{figure}

As discussed in Section \ref{sec:GW_Alg}, the Karloff instances which approach the $0.878$-approximation ratio bound have the form $J(m, m/2, b)$ with $m$ even and $b/m$ approaching $\approx0.0777$ (corresponding to a critical angle of $\theta^* \approx 2.33$ radians between points in the optimal SDP solution). In this section, we will let $r := b/m$ denote the ratio between $b$ and $m$ for any Karloff instance.

In Section \ref{sec:qaoaPerformanceInterestingInstances}, it was observed that single-layer QAOA ($p=1)$ was unable to beat such instances (with $r \approx 0.0777$). However, it may be possible that the worst- and best-case Karloff instances for GW and QAOA are different; this warrants the study of both algorithms for different values of $r$.

In Figure \ref{fig:karloff_approx_ratios}, we consider the approximation ratios of both algorithms on Karloff instances with $0 < r < 1/4$. The $r < 1/4$ condition comes from the condition in the proven conjecture mentioned in Section \ref{sec:small_instances_karloff}; this condition was needed to prove the Max-Cut formula for Karloff instances (Equation \ref{eqn:maxcut_GW}). We exclude $r = 0$ since finding the Max-Cut of such graphs is trivial: it can be shown that such ($r=0$) Karloff instances are isomorphic to a disconnected graph where each connected component is just two vertices connected by a single edge.

\subsection{Instance-Independent Approximation Ratios}
From Theorem \ref{Thm-Karloff}, it can be shown that the approximation ratio of GW is solely a function of $r = b/m$; these GW approximation ratios are represented by the red line in Figure \ref{fig:karloff_approx_ratios}.

For $p=1$ QAOA, the approximation ratio is \emph{not} solely a function of $r=b/m$; however, for any choice of fixed $r < 1/6$, the approximation ratio can be calculated in the limit as $m \to \infty$. In particular, slight algebraic modifications to the equations in the proof of Theorem \ref{thm:karloffQAOAApproxRatio} give that this limiting approximation ratio (as $m \to \infty$) is given by:
\begin{equation}\label{eqn:limiting_approx_ratio_karloff}
\alpha_\text{QAOA} = \frac{1}{2-4r}.
\end{equation}

The condition $r < 1/6$ above comes from the fact that Theorem \ref{thm:karloffQAOAApproxRatio} utilizes that Karloff instances are triangle-free for $r < 1/6$ (see proof of Theorem \ref{thm:karloffTriangleFree}).

Equation \ref{eqn:limiting_approx_ratio_karloff} is plotted in black in Figure \ref{fig:karloff_approx_ratios}; the solid part ($0 < r < 1/6$) indicates where the formula provably holds and the dashed part is an extrapolation of the formula for $1/6 \leq r < 1/4$. As we will soon see, empirical evidence suggests that this extrapolation is correct, i.e., that Equation \ref{eqn:limiting_approx_ratio_karloff} corresponds to the limiting QAOA approximation ratio for all $0 < r < 1/4$.

\subsection{Instance-Specific Approximation Ratios}
Figure \ref{fig:karloff_approx_ratios} also shows the instance-specific approximation ratios of specific Karloff instances for all karloff instances $J(m, m/2, b)$ with even $m$ up to $m=300$ and all integers $b$ with $0 < b < m/4$. This amounts to 5550 different Karloff instances whose $r$-values are roughly\footnote{Due to the way the set of instances is constructed, there are some spikes in frequency for $r$-values that correspond to simplified fractions with small denominators (e.g. $r = 1/5, 1/6, 1/8,$ etc).} distributed uniformly in the interval $r \in (0,1/4)$. These approximation ratios were calculated using the grid search described in Section \ref{sec:qaoaPerformanceInterestingInstances}; though with a smaller grid ($1000 \times 1000$) due to computational limitations. We remark that in the case that $r < 1/6$, the instances are triangle-free and thus no grid search is needed since the optimal parameters are known in this case (see proof of Theorem \ref{thm:karloffQAOAApproxRatio}).

For large values of $m$ (corresponding to the yellow dots of Figure \ref{fig:karloff_approx_ratios}), we see that the instance-specific approximation ratios indeed approach the limiting approximation ratio given by Equation \ref{eqn:limiting_approx_ratio_karloff} (the black lines), even in the case where we have extrapolated the formula to $1/6 \leq r < 1/4$ (dashed line).

\subsection{Approximation as $r\to 0$}
Lastly, we remark on the approximation ratios of both algorithms at the extremes: $r=0$ and $r=1/4$. As previously mentioned, at $r=0$, the Karloff instances are just a collection of disjoint edges. It is straightforward to show that both GW and $p=1$ QAOA achieve approximation ratios of 1.0 (due to the fact that each algorithm can obtain the maximum cut for a single-edge graph). Interestingly, the approximation ratio for QAOA is not continuous at $r=0$: while the approximation is 1.0 at $r=0$, it approaches an approximation ratio of:
$$\lim_{r \to 0^+} \frac{1}{2-4r} = \frac{1}{2},$$
as given by Equation \ref{eqn:limiting_approx_ratio_karloff} and as seen in Figure \ref{fig:karloff_approx_ratios}.

We also remark that for any sequence of Karloff graphs whose $r$-values are non-zero but approach 0, it must necessarily be the case that the sequence of $m$-values for such graphs tends to infinity; this explains the convergence of the instance-specific approximation ratios (the individual dots of Figure \ref{fig:karloff_approx_ratios}) to $1/2$ as $r \to 0^+$.

\subsection{Approximation as $r\to 1/4$}
On the other extreme, remarkably, both algorithms appear to approach an approximation ratio of $1$ as $r$ approaches $1/4$ from the left, regardless of the size of the graph. This occurs because as $r \to 1/4$, one can show that the Max-Cut value approaches half of the edges. Let $G_1, G_2, \dots$ be a sequence of Karloff graphs $J(m, m/2, b)$ whose corresponding $r$-values ($r_1, r_2, \dots)$ approach $1/4$, then:
\begin{align*}
    &=\lim_{j \to \infty} \frac{\mc(G_j)}{|E_j|}\\
    &= \lim_{j \to \infty} \frac{\frac{n_j}{2}{m_j/2 \choose b_j}^2(1 - 2r_j)}{|E_j|} \tag{see Theorem \ref{thm:karloffQAOAApproxRatio} proof}\\
    &= \lim_{j \to \infty}\frac{ \frac{n_j}{2}{m_j/2 \choose b_j}^2(1 - 1/2)}{|E_j|} \tag{as $r_j \to 1/4$}\\
   &= \lim_{j \to \infty}\frac{\frac{1}{2}\frac{n_jd_j}{2}}{|E_j|} \tag{Lemma \ref{thm:karloffDegree}} \\
   &= \lim_{j \to \infty}\frac{\frac{1}{2}|E_j|}{|E_j|} \tag{handshaking lemma} \\
   &= \frac{1}{2}.
\end{align*}

For $p=1$ QAOA, at $\gamma=0$ and $\beta = 0$, the algorithm will cut (in expectation) exactly half of the edges, and hence, if the Max-Cut approaches half of the edges (which we've shown is the case as $r \to 1/4$), then the approximation ratio will approach 1.

For the GW algorithm, as $r \to 1/4$, the corresponding angle $\theta$ between points in the SDP approaches $\theta = \arccos(4r-1) \to  \pi/2$, meaning each edge has a probability approaching $\frac{\theta}{\pi} \to \frac{\pi/2}{\pi} = \frac{1}{2}$ of being cut, and thus half of the edges are cut in expectation, giving an approximation ratio that approaches 1 as the Max-Cut itself also approaches half the edges as previously shown.

\section{Mixed Weight Instances with Large Magnitude Range}
\label{sec:mixedWeightInstances}

\begin{figure}[t]
    \centering
    \includegraphics[scale=0.6]{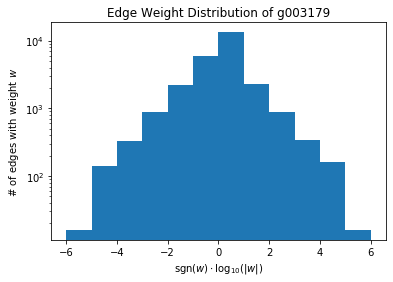}
    \caption{\footnotesize The edge-weight distribution of a challenge instance ``g003179" for classical heuristics, with both positive and negative edge weights spanning several orders of magnitude; 9 other challenging MQLib instances had similar distributions (see Table \ref{tab:MQLIB_Instances}, all except g000435 and g001349.}
    \label{fig:edgeWeights}
\end{figure}

\begin{table*}[!t]
\centering
\def\arraystretch{1.3}
\begin{tabular}{|c|c|c|c|c|c|c|c|}
\hline
     Instance &  Nodes &  Edges & A.R. & Expected GW Ratio & GW Time & Neg. Weights & Magnitude Range \\\hline
     g000330 & 300 & 14011 &  176  & 0.9814 & 359.34525442123413 & Yes & 5.308\\
     g000417 & 300 & 34753 & 176  & 0.9699 & 321.5829267501831& Yes & 5.388\\
     g000435 & 420 & 619 & 247  & 0.9731 & 880.8003544807434 & No & 2.581\\
     g000572 & 200 & 18246 & 120 & 0.9675 & 250.95755791664124& Yes & 5.365\\
     g000723 & 200 & 18230 & 120  & 0.9532 & 130.70434188842773& Yes & 5.316\\
     g000762 & 250 & 26522 & 147  & 0.9642 & 409.95273995399475& Yes & 5.443\\
     g001349 & 640 & 960 & 377 & 0.9619 & 838.4398355484009 & No & 0.566\\
     g001361 & 250 & 12031 & 147 & 0.9864 & 192.36800813674927& Yes & 5.267\\
     g002581 & 200 & 18237 & 120 & 0.9695 & 249.02441000938416& Yes & 5.380\\
     g002935 & 300 & 13983 & 176 & 0.9789 & 235.74102520942688& Yes & 5.342\\
     g003179 & 250 & 26534 & 147 & 0.9689 & 169.3680293560028& Yes & 5.263\\
    \hline
\end{tabular} 
\caption{\label{tab:MQLIB_Instances}\footnotesize The table includes all the instances in the MQLIB library where no MQLIB heuristic was able to obtain 99.9\% of the optimal solution within 5\% of the allotted runtime for the instance. The columns correspond to the instance name, the number of nodes, nonzero-weight edges, the allotted runtime (A.R.), the GW ratio (the SDP solution was obtained via a solver with the expectation then being calculated analytically), the time needed to run the GW solver, whether the instance has negative-weighted edges, and the range of edge-weight magnitudes (defined as $\max_{e \in E} \log_{10}(|w_e|) - \min_{e \in E} \log_{10}(|w_e|)$ across all edges with non-zero edge weights).}
\end{table*}
Max-Cut instances that are difficult across a large collection of heuristics may also be of practical interest. To help us find such instances, we analyzed the data collected by Dunning et al.'s large-scale experimental study consisting of 38 Max-Cut heuristics and 3,296 instances in \href{https://github.com/MQLib/MQLib}{MQLib} \cite{DGS18}. In their work, they used machine learning in order to be able to predict which heuristic would perform best on which instances depending on certain graph properties. We used the semidefinite-based exact solver, \href{https://web.archive.org/web/20250113192126/https://biqcrunch.lipn.univ-paris13.fr/}{BiqCrunch}, to find the optimal solution for as many instances as possible \cite{KMR17}. Next, we asked if the rate of closing the gap to Max-Cut was slower on some instances. In particular, we asked if there existed some instances where none of the implemented heuristics found a cut within 99.9\% of the Max-Cut within 5\% of the allotted instance-specific run time.  We found that there were only 11 instances, with less than 700 nodes, where this happened, with a specific edge-weight distribution. In particular, 9 out of 11 of these graphs had a number of nodes between 200 and 300, and had both positive and negative edge weights spanning several orders of magnitude, as seen in Figure \ref{fig:edgeWeights}. See Table \ref{tab:MQLIB_Instances} for details.

\end{document}